\newif\ifBleck
\newif\ifEndNotes 
\newcommand\FnSym{{\scriptsize\PencilLeftDown\kern.1em}} 
\newcommand\EnSym {{$\bigtriangledown$}}
\newif\ifMakeMarkupsCalled 
\newif\ifSuppress 
\newcommand\MakeMarkups[3][.]{ 
 \Suppressfalse 
 \ifBleck\Suppresstrue\fi
 \ifx0#1\Suppresstrue\fi
 \ifx1#1\Suppressfalse\fi
 
 \ifSuppress\expandafter\newcommand\csname#2x\endcsname{\relax}\else
                   \expandafter\newcommand\csname#2x\endcsname{#3}\fi 
 
 \ifSuppress\expandafter\newcommand\csname#2\endcsname[1]{##1}\else
                   \expandafter\newcommand\csname#2\endcsname[1]{{\csname#2x\endcsname##1}}\fi 

 \ifSuppress\expandafter\newcommand\csname#2d\endcsname[1]{{##1}}\else
                   \expandafter\newcommand\csname#2d\endcsname[1]{{\csname#2x\endcsname\sout{##1}}}\fi 

 \ifSuppress\expandafter\newcommand\csname#2r\endcsname[2]{{##1}}\else
                   \expandafter\newcommand\csname#2r\endcsname[2]{\csname#2d\endcsname{##1} \csname#2\endcsname{##2}}\fi 

 \ifSuppress\expandafter\newcommand\csname#2TD\endcsname{\relax}\else
                   \expandafter\newcommand\csname#2TD\endcsname{\csname#2\endcsname{\fbox{\texttt{#2} to do}}}\fi 

 \ifSuppress\expandafter\newcommand\csname#2Bar\endcsname{\relax}\else
                   \expandafter\newcommand\csname#2Bar\endcsname{\csname#2\endcsname{\scriptsize\XSolidBrush}}\fi 

 \ifSuppress\expandafter\newcommand\csname#2f\endcsname[2][]{\relax}\else
  \expandafter\newcommand\csname#2f\endcsname[2][]{
    {\mbox{\csname#2x\endcsname\tiny$\boxtimes$}\marginpar{\csname#2x\endcsname\fbox{\FnSym\footnotemark}}\relax 
    \footnotetext{\csname#2x\endcsname\textsf{##1}~##2}}}\fi
    
 \ifSuppress\expandafter\newcommand\csname#2e\endcsname[1]{\relax}\else
  \expandafter\newcommand\csname#2e\endcsname[1]{
   \EndNotestrue 
   \mbox{\scriptsize\csname#2x\endcsname$\boxtimes$}\relax
   \marginpar{\csname#2x\endcsname\fbox{\EnSym\endnotemark}}
   { 
    \edef\z{~[from~p.\thepage}
    \expandafter\endnotetext\expandafter{\z]~\csname#2x\endcsname ##1\newpage}
   } 
  }\fi
  
 \ifSuppress\expandafter\newcommand\csname#2fe\endcsname[2][]{\relax}\else
  \expandafter\newcommand\csname#2fe\endcsname[2][]{
   \def\File{##1}\relax
   \ifx\File\empty\csname#2f\endcsname{##2}\else 
    \EndNotestrue 
    \mbox{\scriptsize\csname#2x\endcsname$\boxtimes$}\marginpar{\csname#1x\endcsname\fbox{\FnSym\footnotemark}}\relax
    \footnotetext{~\csname#2x\endcsname##2\ --- See Endnote \EnSym\endnotemark\ from file \texttt{##1.tex}.}\relax
    { 
     \edef\z{~[from~Footnote~\thefootnote~on~p.\thepage}
     \expandafter\endnotetext\expandafter{\z]~\csname#2x\endcsname\par\input{##1}\newpage}
    } 
   \fi
  }\fi
 
 \ifSuppress\relax\else
  \csname#2f\endcsname{
   $\backslash$\texttt{#2}$\cdots$\ markups are in \textbf{this} colour\ifx#1..\else\ifx1#1.\else, for #1.\fi\fi
   \ifMakeMarkupsCalled\relax\else 
    \begin{quote}\begin{tabular}{l@{\hspace{2em}}p{.8\linewidth}}
     \multicolumn{2}{l}{\texttt{$\backslash$MakeMarkups\ifx#1.\relax\else[#1]\fi\{#2\}\{{\it$\langle$colour command\/$\rangle$}\}}}\\
         & See comments at \texttt{$\backslash$MakeMarkups} definition. \\[1ex]
     \texttt{$\backslash$#2\{$\langle$text$\rangle$\}} & Sets \texttt{$\langle$text$\rangle$} in \texttt{#2}'s colour. \\
     \texttt{$\backslash$#2x} & Changes to \texttt{#2}'s colour from that point on (until end of context). \\
     \texttt{$\backslash$#2d\{$\langle$text$\rangle$\}} & Sets \texttt{$\langle$text$\rangle$} in \texttt{#2}'s colour with a strikethrough (i.e.\ delete). \\
     \texttt{$\backslash$#2r\{$\langle$this$\rangle$\}\{$\langle$that$\rangle$\}} & Strikes through \texttt{$\langle$this$\rangle$} and inserts \texttt{$\langle$that$\rangle$} (i.e.\ replace). \\
     \texttt{$\backslash$#2f\{$\langle$text$\rangle$\}} & Puts \texttt{$\langle$text$\rangle$} in a \texttt{#2}-coloured footnote with a {\tiny$\boxtimes$} in the main text. \\
     \texttt{$\backslash$#2e\{$\langle$text$\rangle$\}} & Puts \texttt{$\langle$text$\rangle$} in a \texttt{#2}-coloured endnote with a $\boxtimes$ in the main text. \\[.5ex]
     \texttt{$\backslash$#2fe[$\langle$this$\rangle$]\{$\langle$that$\rangle$\}} & Makes a  \texttt{$\backslash$#2f\{$\langle$that$\rangle$\}} that refers to a \\
       & \texttt{$\backslash$#2e\{$\langle$contents of file this.tex$\rangle$\}}. \\ 
       & Without the optional argument, acts as \texttt{$\backslash$#2f\{$\langle$that$\rangle$\}}. \\[.5ex]
     \texttt{$\backslash$#2Bar} & Inserts ``burn after reading'' symbol \csname#2Bar\endcsname, meaning
      \medskip\begin{quote}\begin{itemize}
       \item If yours is the only \csname#2Bar\endcsname\ in this (presumably someone else's) footnote, and you are happy that the footnote has been addressed,
       go ahead and comment-out the whole footnote. (The \csname#2Bar\endcsname\ is their request for you to ``approve and remove''.)
       \item If you are not happy, delete only your \csname#2Bar\endcsname\ and follow-on in the footnote (in your colour, i.e.\ with \texttt{$\backslash$#2x}) saying why you are not happy.
       \item If you are happy, but there are others' burn-after-reading symbols as well as yours, just delete yours; the other people have not yet responded.
      \end{itemize}
      \end{quote}
      \medskip The idea is that when everyone's happy, the last person will comment-out the footnote.
      \\
     \texttt{$\backslash$#2TD} & inserts {\csname#2TD\endcsname}\ . \\
    \end{tabular}\end{quote}
   \fi
  }
  \MakeMarkupsCalledtrue 
 \fi
}
\newcommand\Sec[1] {\S\ref{#1}}
\newcommand\Eqn[1] {(\ref{#1})}
\newcommand\Thm[1] {Thm.~\ref{#1}}
\newcommand\Cor[1] {Cor.~\ref{#1}}
\newcommand\Lem[1] {Lem.~\ref{#1}}
\newcommand\Def[1] {Def.~\ref{#1}}
\newcommand\Fig[1] {Fig.~\ref{#1}}
\renewcommand\Sec[1] {\S\ref{#1}}
\newcommand\App[1] {App.~\ref{#1}}
\newcommand\From {\mbox{\rm :\kern-.05em$\in$\kern.1em}}
\newcommand\Defs {{:=}\,}
\newcommand\Gets {{:}{=}\,}
\newcommand\Dist {{\mathbb D}} 
\newcommand\Wide[1] {~~~#1~~~}
\newcommand\Real {{\mathbb R}}
\newcommand\Fun {\mathbin{\rightarrow}}
\newcommand\In {{:}\,}
\newcommand\Implies {\mathbin{\Rightarrow}}
\newcommand\Uni[1] {\mbox{\small$\Upsilon$}\kern-0.1em_{#1}}
\newcommand{\oset}[2]{%
  {\mathop{#2}\limits^{\vbox to -1.5\ex@{\kern-\tw@\ex@
   \hbox{\scriptsize #1}\vss}}}}
\newcommand\Apply {\mathbin{\raisebox{.1em}{\kern.05em\scriptsize$\rangle$}}}
\newcommand\ApplyR {\mathbin{\raisebox{.1em}{\kern.05em\scriptsize$\langle$}}}
\newcommand\PC[1] {\mathbin{{}_{#1}\kern-.05em\oplus}}
\newenvironment{Reason}{\vspace{-.0em}\begin{tabbing}\hspace{2em}\= \hspace{1cm} \= \kill}
    {\end{tabbing}\vspace{-1em}}
\newcommand\Step[2] {#1 \> $\begin{array}[t]{@{}llll}#2\end{array}$ \\}
\newcommand\StepR[3] {#1 \> $\begin{array}[t]{@{}llll}#3\end{array}$
    \` {\RF \makebox[0pt][r]{\begin{tabular}[t]{r}``#2''\end{tabular}}} \\}
\newcommand\WideStepR[3] {#1 \>
    $\begin{array}[t]{@{}ll}~\\#3\end{array}$ \`
    {\RF \makebox[0pt][r]{\begin{tabular}[t]{r}``#2''\end{tabular}}} \\}
\newcommand\RF {\small}
\pgfplotsset{compat=1.10}
\newcommand{\myVec}{\textit{Vec}}
\newcommand{\myDist}{\textit{dist}}
\newcommand{\Lap}[2]{\textit{Lap}^{#1}_{#2}}
\newcommand{\Gam}[2]{\textit{Gam}^{#1}_{#2}}
\newcommand{\Unif}[1]{\textit{Uniform}^{#1}}
\begin{document}

%
%
\pagestyle{headings}  
%

\mainmatter              
\title{Generalised Differential Privacy for Text Document Processing}
%
%
\author{Natasha Fernandes\inst{1,2} \and Mark Dras\inst{1} \and Annabelle McIver\inst{1}
\thanks{We acknowledge the support of the Australian Research Council Grant  DP140101119. }
}
\authorrunning{N. Fernandes et al.} 
\institute{Macquarie University, Sydney, Australia \and
INRIA, Paris-Saclay and \'{E}cole Polytechnique, France}

\maketitle              

\begin{abstract}
We address the problem of how to ``obfuscate'' texts by removing stylistic clues which can identify authorship, whilst preserving (as much as possible) the content of the text.  In this paper we combine ideas from ``generalised differential privacy'' and machine learning techniques for text processing 
 to model privacy for text documents.  We define a privacy mechanism that operates at the level of text documents represented as ``bags-of-words'' --- these representations are typical in machine learning and contain sufficient information to carry out many kinds of classification tasks including \emph{topic identification} and \emph{authorship attribution} (of the original documents). We show that our mechanism satisfies privacy with respect to a metric for semantic similarity, thereby providing a balance between utility, defined by the semantic content of texts, with the obfuscation of stylistic clues.  We demonstrate our implementation on a ``fan fiction'' dataset, confirming that it is indeed possible to disguise writing style effectively whilst preserving enough information and variation for accurate content classification tasks.

{\bf Keywords:}  Generalised differential privacy, Earth Mover's metric, natural language processing, author obfuscation.
\end{abstract}
%

\section{Introduction}

Partial public release of formerly classified data incurs the risk that more information is disclosed than intended.  This is particularly true of data in the form of text such as government documents or patient health records. Nevertheless there are sometimes compelling reasons for declassifying data in some kind of ``sanitised'' form --- for example government documents are frequently released as redacted reports when the law demands it, and health records are often shared to facilitate medical research. Sanitisation is most commonly  carried out by hand but, aside from the cost incurred in time and money, this approach provides no guarantee that the original privacy or security concerns are met. 

To encourage researchers to focus on privacy issues related to text documents the digital forensics community PAN@Clef (\cite{stein:2017l}, for example) proposed a number of challenges that are typically tackled using \emph{machine learning}. In this paper our aim is to demonstrate how to use ideas from \emph{differential privacy} to address some aspects of the PAN@Clef challenges by showing how to provide strong a priori privacy guarantees in document disclosures. 

We focus on the problem of \emph{author obfuscation}, namely to automate the process of changing a given document so that as much as possible  of its original substance remains, but that the author of the document can no longer be identified.  Author obfuscation is very difficult to achieve because it is not clear exactly what to change that would sufficiently mask the author's identity. 
In fact author properties can be determined by ``writing style'' with a high degree of accuracy: this can include author identity \cite{koppel2011authorship} or other undisclosed personal attributes such as native language \cite{tetreault-etal:2013:BEA,malmasi-dras:2018:CL}, gender or age \cite{koppel-etal:2002:LLC,stein:2017o}.  These techniques have been deployed in real world scenarios: native language identification was used as part of the effort to identify the anonymous perpetrators of the 2014 Sony hack \cite{taia:2014}, and it is believed that the US NSA used author attribution techniques to uncover the identity of the real humans behind the fictitious persona of Bitcoin ``creator'' Satoshi Nakamoto.\footnote{\url{https://medium.com/cryptomuse/how-the-nsa-caught-satoshi-nakamoto-868affcef595}}

Our contribution concentrates on the perspective of the ``machine learner'' as an adversary that works with the standard ``bag-of-words'' representation of documents often used in text processing tasks.  A \emph{bag-of-words} representation retains only the original document's words and their frequency (thus forgetting the order in which the words occur). Remarkably this representation still contains sufficient information to enable the original authors to be identified (by a stylistic analysis) \emph{as well as} the document's topic to be classified, both with a significant degree of accuracy.~\footnote{This includes, for example, the character n-gram representation used for author identification in \cite{koppel-winter:2014:JASIST}.}  Within this context we reframe the PAN@Clef author obfuscation challenge as follows:

\begin{quotation}
Given an input bag-of-words representation of a text document, provide a mechanism which changes the input without disturbing its topic classification, but that the author can no longer be identified.
\end{quotation}
 In the rest of the paper we use ideas inspired by \emph{$d_\mathcal X$-privacy} \cite{chatzikokolakis2013broadening}, a metric-based extension of differential privacy, to implement an automated privacy mechanism which, unlike current ad hoc approaches to author obfuscation, gives access to both solid privacy and utility guarantees.\footnote{Our notion of utility here is similar to other work aiming at text privacy, such as \cite{weggenmann2018syntf,li-etal:2018:ACL}.} 
 
 We implement a mechanism $K$ which takes $b, b'$  bag-of-words inputs and produces ``noisy'' bag-of-words  outputs determined by $K(b), K(b')$ with the following properties:

\begin{enumerate}
\item [{\bf Privacy:}] If $b, b'$ are classified to be ``similar in topic'' then, depending on a privacy parameter $\epsilon$ the \emph{outputs} determined by $K(b)$ and $K(b')$ are also ``similar to each other'', irrespective of authorship.
\item[{\bf Utility:}] Possible outputs determined by $K(b)$ are distributed 
according to a Laplace probability density function scored according to a semantic similarity metric.
\end{enumerate}

In what follows we define \emph{semantic similarity} in terms of the classic \emph{Earth Mover's distance} used in machine learning for topic classification in text document processing.~\footnote{In NLP, this distance measure is known as the Word Mover's distance. We use the classic Earth Mover's here for generality.}
We explain how to combine this with $d_\mathcal X$-privacy which extends privacy for databases to other unstructured domains (such as texts). 

 In \Sec{s1501} we set out the details of the bag-of-words representation of documents and define the Earth Mover's metric for topic classification. In \Sec{s1520} we define a generic mechanism which satisfies ``$E_{d_{\mathcal X}}$-privacy'' relative to the Earth Mover's metric $E_{d_\mathcal X}$ and show how to use it for our obfuscation problem. We note that our generic mechanism is of independent interest for other domains where the Earth Mover's metric applies. 
 In \Sec{s1542} we describe how to implement the mechanism for data represented as real-valued vectors and prove its privacy/utility properties with respect to the Earth Mover's metric; in \Sec{s1210} we show how this applies to bags-of-words. 
 Finally in \Sec{s1545} we provide an experimental evaluation of our obfuscation mechanism, and discuss the implications.
 \medskip
 
Throughout we assume standard definitions of probability spaces \cite{Grimmett:92}.  For a set ${\cal A}$ we write $\Dist{\cal A}$ for the set of (possibly continuous) probability distributions over ${\cal A}$. For $\eta \in \Dist{\cal A}$, and $A \subseteq {\cal A}$ a (measurable) subset  we write $\eta(A)$ for the probability that (wrt.\ $\eta$) a  randomly selected $a$ is contained in $A$. In the special case of singleton sets, we write $\eta\{ a\}$. If mechanism $K\In {\alpha}\Fun \Dist\alpha$, we write $K(a)(A)$ for the probability that if the input is $a$, then the output will be contained in $A$.
 

%
%

\section{Documents, topic classification and Earth Moving}\label{s1501}

In this section we summarise the elements from machine learning and text processing  needed for this paper.  Our first definition sets out the representation for documents we shall use throughout. It is a typical representation of text documents used in a variety of classification tasks.

\begin{definition}\label{d1528}
Let ${\cal S}$ be the set of all words (drawn from a finite alphabet). A \emph{document} is defined to be a finite bag over ${\cal S}$, also called a \emph{bag-of-words}.  We denote the set of documents as ${\mathbb B}{\cal S}$, i.e.\  the set of (finite) bags over ${\cal S}$. 
\end{definition}

Once a text is represented as a bag-of-words, depending on the processing task, further representations of the words within the bag are usually required. We shall focus on two important representations: the first is when the task is semantic analysis for eg.\ topic classification, and the second is when the task is author identification. We describe the representation for topic classification in this section, and leave the representation for author identification for \Sec{s1210} and \Sec{s1545}.

\subsection{Word embeddings}

Machine learners can be trained to classify the topic of a document, such as ``health'', ``sport'', ``entertainment''; this notion of topic means that the words within documents will have particular semantic relationships to each other.   There are many ways to do this classification, and in this paper we use a technique that has as a key component ``word embeddings'', which we summarise briefly here.  

A \emph{word embedding} is a real-valued vector representation of words where the precise representation has been experimentally determined by a neural network sensitive to the way words are used in sentences \cite{mikolov2013efficient}. Such embeddings have some interesting properties, but here we only rely on the fact that when the embeddings are compared using a distance determined by a pseudometric%
\footnote{Recall that a pseudometric satisfies both the triangle inequality and symmetry;  but different words could be mapped to the same vector and so ${\myDist_{\myVec}}(w_1, w_2) =0$ no longer implies that $w_1=w_2$.}
 on $\Real^n$, words with similar meanings are found to be close together as word embeddings, and words which are significantly different in meaning are far apart as word embeddings.

\begin{definition}\label{d0925}
An $n$-dimensional word embedding is a mapping $\myVec: {\cal S}\Fun \Real^n$.   Given a pseudometric  $\myDist$ on $\Real^n$ we define a distance on words $ {\myDist_{\myVec}} :{\cal S}{\times}{\cal S}{\rightarrow}\Real_{\geq}$ as follows:
$$
 {\myDist_{\myVec}}(w_1, w_2) \Wide{\Defs} \myDist(\myVec(w_1), \myVec(w_2))~.
$$
\end{definition}

Observe that the property of a pseudometric  on $\Real^n$ carries over to ${\cal S}$.  

\begin{lemma}\label{l1227}
If $\myDist$ is a pseudometric on $\Real^n$ then ${\myDist_\myVec}$ is also a pseudometric on ${\cal S}$. 
\begin{proof}
Immediate from the definition of a pseudometric: i.e.\ the triangle equality and the symmetry of ${\myDist_\myVec}$ are inherited from $\myDist$.
\end{proof}
\end{lemma}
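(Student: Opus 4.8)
The plan is to verify directly that $\myDist_\myVec$ satisfies each defining axiom of a pseudometric, pulling each one back from the corresponding property of $\myDist$ along the embedding $\myVec$. Recall from the footnote to \Def{d0925} that a pseudometric is required to satisfy non-negativity, reflexivity (i.e.\ $\myDist(v,v)=0$), symmetry, and the triangle inequality, but \emph{not} the identity-of-indiscernibles axiom. So there are exactly four conditions to check for $\myDist_\myVec$ on ${\cal S}$, and each should transfer mechanically.

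First I would introduce the shorthand $v_i \Defs \myVec(w_i) \in \Real^n$ for words $w_1,w_2,w_3 \in {\cal S}$, so that $\myDist_\myVec(w_i,w_j) = \myDist(v_i,v_j)$ by \Def{d0925}. Non-negativity and reflexivity are then immediate: $\myDist_\myVec(w_1,w_2) = \myDist(v_1,v_2) \ge 0$ and $\myDist_\myVec(w_1,w_1) = \myDist(v_1,v_1) = 0$, both inherited verbatim from $\myDist$. For symmetry I would compute
$$\myDist_\myVec(w_1,w_2) \Wide{=} \myDist(v_1,v_2) \Wide{=} \myDist(v_2,v_1) \Wide{=} \myDist_\myVec(w_2,w_1),$$
using symmetry of $\myDist$ in the middle step. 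For the triangle inequality I would apply the corresponding inequality for $\myDist$ to the three image points $v_1,v_2,v_3$:
$$\myDist_\myVec(w_1,w_3) \Wide{=} \myDist(v_1,v_3) \Wide{\le} \myDist(v_1,v_2)+\myDist(v_2,v_3) \Wide{=} \myDist_\myVec(w_1,w_2)+\myDist_\myVec(w_2,w_3).$$

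There is essentially no obstacle here; the entire content is the one-line observation that postcomposing a (pseudo)metric with an arbitrary function preserves every axiom except possibly identity of indiscernibles. The only point worth flagging explicitly is \emph{why} the conclusion is merely a pseudometric rather than a full metric even when $\myDist$ happens to be a genuine metric: since $\myVec$ need not be injective, two distinct words $w_1 \ne w_2$ may satisfy $\myVec(w_1)=\myVec(w_2)$ and hence $\myDist_\myVec(w_1,w_2)=0$. Thus identity of indiscernibles genuinely fails and cannot be recovered, which is exactly the behaviour anticipated in the footnote to \Def{d0925}. This is why the statement is phrased as ``pseudometric in, pseudometric out'' and does not attempt to upgrade to a metric.
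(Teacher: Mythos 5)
Your proposal is correct and takes essentially the same approach as the paper, which simply states that symmetry and the triangle inequality are inherited from $\myDist$; you have merely written out the axiom-by-axiom verification that the paper leaves as immediate. Your closing remark on the non-injectivity of $\myVec$ matches the paper's own footnote explaining why the result is only a pseudometric.
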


Word embeddings are particularly suited to language analysis tasks, including topic classification, due to their useful semantic properties. Their effectiveness depends on the quality of the embedding $\myVec$, which can vary depending on the size and quality of the training data.
We provide more details of the particular embeddings in \Sec{s1545}.
Topic classifiers can also differ on the choice of underlying metric $\myDist$, and we discuss variations in \Sec{s0949}.

In addition, once the word embedding $\myVec$ has been determined, and the distance $\myDist$ has been selected for comparing ``word meanings'', there are a variety of semantic similarity measures that can be used to compare documents, for us bags-of-words. In this work we use the ``Word Mover's Distance'', which was shown to perform well across multiple text classification tasks \cite{kusner2015word}.


The \emph{Word Mover's Distance} is based on the classic \emph{Earth Mover's Distance} \citep{rubner2000earth} used in transportation problems with a given distance measure.  
We shall use the more general Earth Mover's definition with $\myDist$
\footnote{In our experiments we take $\myDist$ to be defined by the Euclidean distance.}
as the underlying distance measure between words. We note that our results can be applied to problems outside of the text processing domain.

%

Let  $X, Y \in \mathbb{B}\cal{S}$; we denote by $X$ the tuple $\langle x_1^{a_1}, x_2^{a_2}, \dots, x_k^{a_k} \rangle$, where $a_i$ is the number of times that  $x_i$ occurs in $X$. Similarly we write $Y = \langle y_1^{b_1}, y_2^{b_2}, \dots, y_l^{b_l} \rangle$; we have $\sum_i a_i = |X|$ and $\sum_j b_j = |Y|$, the sizes of $X$ and $Y$ respectively. We define a \emph{flow matrix} $F \in \mathbb{R}^{k \times l}_{\geq 0}$ where $F_{ij}$ represents the (non-negative) amount of flow from $x_i \in X$ to $y_j \in Y$. 

\begin{definition}{(Earth Mover's Distance)}\label{d0938}
\label{emd}
Let $d_\mathcal{S}$ be a (pseudo)metric over $\cal{S}$.  The Earth Mover's Distance with respect to $d_\mathcal{S}$, denoted by ${E_{d_\mathcal{S}}}$, is the solution to the following linear optimisation:
\begin{align}
    &   {E_{d_\mathcal{S}}}(X, Y) \Wide{\Defs}  \min \sum\limits_{x_i \in X} \sum\limits_{y_j \in Y} d_\mathcal{S}(x_i, y_j) F_{ij} \label{min1}~,  \quad \textit{subject to:}\\
       & \sum\limits_{i = 1}^{k} F_{ij} = \frac{b_j}{|Y|}  \label{const1} \quad \textit{and}\quad
        \sum\limits_{j = 1}^{l} F_{ij} = \frac{a_i}{|X|}~, \quad  F_{ij} \ge 0, \quad1 \le i \le k, 1 \le j \le l   
\end{align}
where the minimum in (\ref{min1}) is over all possible flow matrices $F$ subject to the constraints (\ref{const1}). In the special case that $|X| = |Y|$, the solution is known to satisfy the conditions of a (pseudo)metric~\citep{rubner2000earth} which we call the Earth Mover's Metric.
\end{definition}

In this paper we are interested in the special case $|X| = |Y|$, hence we use the term \emph{Earth Mover's metric} to refer to ${E_{d_\mathcal{S}}}$.

We end this section by  describing how  texts are prepared for machine learning tasks, and how \Def{d0938} is used to distinguish documents. Consider the text snippet ``The President greets the press in Chicago". The first thing is to remove all ``stopwords'' -- these are  words which do not contribute to  semantics, and include things like  prepositions, pronouns and articles. The words remaining are those that contain a great deal of semantic and stylistic traits.%
\footnote{In fact the way that stopwords are used in texts turn out to be characteristic features of authorship. Here we follow standard practice in natural language processing to remove them for efficiency purposes and study the privacy of what remains. All of our results apply equally well had we left stopwords in place.}

In this case we obtain the bag: 
\[
    b_1 \Wide{\Defs} \langle \text{President}^1, ~\text{greets}^1, ~\text{press}^1, ~\text{Chicago}^1\rangle~.
\]
%
%
Consider a second bag: 
$ b_2 {\Defs}  \langle \text{Chief}^1, \text{speaks}^1, \text{media}^1, \text{Illinois}^1\rangle 
$, corresponding to a different text. \Fig{emd-fig1} illustrates the optimal flow matrix which solves the optimisation problem in \Def{d0938} relative to $d_{\cal S}$.  Here each word is mapped completely to another word, so that $F_{i,j}= 1/4$ when $i=j$ and $0$ otherwise. We show later that this is always the case between bags of the same size. 
With these choices we can compute the distance between $b_1, b_2$:
\begin{align}\label{e1023}
    E_{d_{\cal S}}(b_1, b_2)& = \frac{1}{4} (d_{\cal S}(\text{President}, \text{Chief}) + d_{\cal S}(\text{greets}, \text{speaks}) +\notag \\
                                       & \qquad d_{\cal S}(\text{press}, \text{media}) + d_{\cal S}(\text{Chicago}, \text{Illinois}) ) \\
                                      & = 2.816~.\notag
\end{align} 

For comparison, consider  the distance between $b_1$ and $b_2$ to a third document, 
$
    b_3 {\Defs} {\langle \text{Chef}^1, \text{breaks}^1, \text{cooking}^1, \text{record}^1\rangle}.
$
Using the same word embedding metric,~\footnote{We use the same word2vec-based metric as per our experiments; this is described in \Sec{s1545}.} we find that ${E_{d_{\cal S}}}(b_1, b_3) = 4.121$ and ${E_{d_{\cal S}}}(b_2, b_3) = 3.941$. Thus $b_1, b_2$ would be classified as semantically ``closer'' to each other than to $b_3$, in line with our own  (linguistic) interpretation of the original texts.

\begin{figure}[htbp]
\small
\centering
\begin{tikzpicture}[>=stealth,
       baseline,anchor=north,
        every node/.style={block},
        block/.style={minimum height=1em,minimum width=1em,outer sep=3pt,rectangle,node distance=0pt}]
        \node (T01) []  at (6,0) {\textbf{Bag} $b_1$};
        \node (T11) [below=of T01, text width=1.3cm] {President,};
        \node (T21) [below=of T11, text width=1.3cm]  {greets, };
        \node (T31) [below=of T21, text width=1.3cm]  {press, };
        \node (T41) [below=of T31, text width=1.3cm]  {Chicago};    
        
         \draw [decorate, decoration={brace,raise=3pt}] (T11.north east) -- (T41.south east) node (a) [midway,draw=none] {};
         \draw [decorate, decoration={brace,mirror,raise=3pt}] (T11.north west) -- (T41.south west) node (b) [midway,draw=none] {};            
\end{tikzpicture}%
\hskip 0.8cm
\begin{tikzpicture}[>=stealth,
     baseline=(current bounding box.north),
     orig/.style={circle,draw=blue,fill=blue,inner sep=0pt,minimum size=2mm},
     obf/.style={circle,draw,fill=black,inner sep=0pt,minimum size=2mm}]
  \draw [very thick,->] (0,0) -- (6,0) node[below,midway] {};
  \draw [very thick,->] (0,0) -- (0,5) node[left] {};
  \node (x1)  [orig, label={[blue]above:Chief}] at (1,4.1) {};
  \node (x4) [orig, label={[blue]below:Illinois}] at (1.6, 1.1) {};
  \node (x2) [orig, label={[blue]below:speaks}] at (4, 3.8) {};
  \node (x3) [orig, label={[blue]above:media}] at (4.8, 1.5) {};
 
  \node (z1) [obf, label=below:President] at (1.8,3.3) {};
  \node (z2) [obf, label=above:greets] at (5, 4.4) {};
  \node (z3) [obf, label=below:press] at (3.7, 1.3) {};
  \node (z4) [obf, label=Chicago] at (2.4, 1.8) {};

   \path[thick, ->] (x1) edge [right,midway] node {$d_1$} (z1);
   \path[thick, ->] (x2) edge [below=5pt,midway] node {$d_2$} (z2);
   \path[thick, ->] (x3) edge [below=5pt,midway] node {$d_3$} (z3);
   \path[thick, ->] (x4) edge [right,midway] node {$d_4$} (z4);
\end{tikzpicture}%
\hskip 0.8cm
\begin{tikzpicture}[>=stealth,
        baseline=(current bounding box.north),
        every node/.style={block},
        block/.style={minimum height=1em,minimum width=1em,outer sep=3pt,rectangle,node distance=0pt}]
        \node (T01) []  at (6,0) {\textbf{Bag}  $b_2$};
        \node (T11) [below=of T01, text width=1.0cm] {Chief,};
        \node (T21) [below=of T11, text width=1.0cm]  {speaks, };
        \node (T31) [below=of T21, text width=1.0cm]  {media, };
        \node (T41) [below=of T31, text width=1.0cm]  {Illinois};    
        
         \draw [decorate, decoration={brace,raise=3pt}] (T11.north east) -- (T41.south east) node (a) [midway,draw=none] {};
         \draw [decorate, decoration={brace,mirror,raise=3pt}] (T11.north west) -- (T41.south west) node (b) [midway,draw=none] {};            
\end{tikzpicture}%
\caption{\small Earth Mover's metric between sample documents.} 
\label{emd-fig1}
\end{figure}

\section{Differential Privacy and the Earth Mover's Metric}\label{s1520}
Differential Privacy was originally defined with the protection of individuals' data in mind.  The intuition is that privacy is achieved through  ``plausible deniability'', i.e.\  whatever output is obtained from a query, it could have just as easily have arisen from  a database that does not contain an individual's details, as from one that does.  In particular, there should be no easy way to distinguish between the two possibilities.  Privacy in text processing means something a little different. A ``query'' corresponds to releasing the topic-related contents of the document (in our case the bag-of-words) --- this relates to the utility because we would like to reveal the semantic content. The privacy relates to investing \emph{individual documents} with plausible deniability, rather than \emph{individual authors} directly. What this means for privacy is the following.  Suppose we are given two documents $b_1, b_2$ written by two distinct authors $A_1, A_2$, and suppose further that $b_1, b_2$ are changed through a privacy mechanism so that it is difficult or impossible to distinguish between them (by any means). Then it is also difficult or impossible to determine whether the authors of the original documents are $A_1$ or $A_2$, or some other author entirely.  This is our aim for obfuscating authorship whilst preserving semantic content.

\medskip

Our approach to obfuscating documents replaces words with other words, governed by probability distributions over possible replacements.  Thus the type of our mechanism is $\mathbb{B}\mathcal{S}\Fun \Dist(\mathbb{B}\mathcal{S})$, where (recall) $\Dist(\mathbb{B}\mathcal{S})$ is the set of probability distributions over the set of (finite) bags of ${\cal S}$. Since we are aiming to find a careful trade-off between utility and privacy, our objective is to ensure that there is a high probability of outputting a document with a similar topic as the input document.
As explained in \Sec{s1501}, topic similarity of documents is determined by the Earth Mover's distance relative to a given (pseudo)metric on word embeddings,
and so our privacy definition must also be relative to the Earth Mover's distance. 

\begin{definition}{(Earth Mover's Privacy)}\label{d1857}
Let $\mathcal{X}$ be a set, and $d_\mathcal{X}$ be a (pseudo)metric on $\mathcal{X}$ and let ${E_{d_\mathcal{X}}}$ be the Earth Mover's metric on $\mathbb{B}\mathcal{X}$ relative to $d_{\cal X}$.
Given $\epsilon \geq 0$, a mechanism $K: \mathbb{B}\mathcal{X} \rightarrow \mathbb{D}(\mathbb{B}\mathcal{X})$ satisfies $\epsilon {E_{d_\mathcal{X}}}$-privacy iff for any $b, b' \in \mathbb{B}\mathcal{X}$ and $Z \subseteq \mathbb{B}\mathcal{X}$:
\begin{equation}\label{e1116} 
      K(b)(Z) \Wide{\le} e^{\epsilon {{E_{d_\mathcal{X}}}(b, b')}} K(b')(Z)~.
\end{equation}
\end{definition}
\Def{d1857} tells us that when two documents are measured to be very close, so that $\epsilon {{E_{d_\mathcal{X}}}(b, b')}$ is close to $0$, then  the multiplier $e^{\epsilon {{E_{d_\mathcal{X}}}(b, b')}}$ is approximately $1$ and the outputs $K(b)$ and $K(b')$ are almost identical. On the other hand the more that the input bags can be distinguished by ${E_{d_\mathcal{X}}}$, the more their outputs are likely to differ. This flexibility is what allows us to strike a balance between utility and privacy; we discuss this issue further in \Sec{s1210} below.

Our next task is to show how to implement a mechanism that can be proved to satisfy \Def{d1857}. We follow the basic construction of Dwork et al. \cite{dwork2006calibrating} for lifting a differentially private mechanism $K \In {\cal X} \Fun \Dist {\cal X}$ to a differentially private mechanism $\underline{K}^\star \In {\cal X}^N \Fun \Dist {\cal X}^N$ on \emph{vectors} in ${\cal X}^N$. (Note that, unlike a bag, a vector imposes a fixed order on its components.)  Here the idea is to apply $K$ independently to each component of a vector $v \in {\cal X}^N$ to produce a random output vector, also in ${\cal X}^N$.  In particular the probability of outputting some vector $v'$ is the product:

\begin{equation}\label{e0857}
\underline{K}^\star(v)\{v'\} \Wide{=} \prod_{1\leq i\leq N} K(v_i)\{v'_i\}~.
\end{equation}
Thanks to the compositional properties of differential privacy when the underlying metric on ${\cal X}$ satisfies the triangle inequality, it's possible to show that the resulting mechanism $\underline{K}^\star$ satisfies the following privacy mechanism \cite{dwork2014}:

\begin{equation}\label{e1939}
\underline{K}^\star(v)(Z) \Wide{\leq} e^{{M_{d_{\cal X}}}(v, v')}\underline{K}^\star(v')(Z) ~,
\end{equation}
where ${M_{d_{\cal X}}}(v, v') \Defs \sum_{1\leq i\leq N}d_{\cal X}(v_i,  v'_i)$, the Manhattan metric relative to $d_{\cal X}$. 

\medskip

However \Def{d1857} does not follow from \Eqn{e1939}, since \Def{d1857} operates on bags of size $N$, and the Manhattan distance between any vector representation of bags is \emph{greater} than $N \times {E_{d_{\cal X}}}$. Remarkably however, it turns out that $K^\star$ --the mechanism that applies $K$ independently to each item in a given bag--  in fact satisfies the much stronger \Def{d1857}, as the following theorem shows, provided the input bags have the same size as each other. 



\newcommand{\restrict}{\!\!\!\downarrow\! N}
\newcommand{\Earth}[1]{E_{#1}}
\newcommand{\Manhattan}[1]{M_{#1}}

\begin{theorem}\label{emprivacy}
Let $d_{\mathcal{X}}$ be a pseudo-metric on $\mathcal{X}$ and let $K: \mathcal{X} \rightarrow \mathbb{D}\mathcal{X}$ be a mechanism satisfying $\epsilon d_{\cal X}$-privacy, i.e.
\begin{equation}\label{e0848}
K(x)(Z) \Wide{\leq} e^{\epsilon d_{\cal X}(x, x')}K(x')(Z)~,~ \textit{for all~} x, x'\in {\cal X},~Z \subseteq {\cal X}.
\end{equation}

 Let $K^\star: \mathbb{B}\mathcal{X} \Fun \Dist( \mathbb{B}\mathcal{X})$ be the mechanism obtained by applying $K$ independently to each element of $X$ for any $X \in \mathbb{B}\mathcal{X}$. Denote by $K^\star\restrict$ the restriction of $K^\star$ to bags of fixed size $N$.  Then $K^\star\restrict$ satisfies $\epsilon N {E_{d_\mathcal{X}}}$-privacy.
%
\begin{proof} (Sketch)
The full proof is given in \App{A0768}; here we sketch the main ideas.

 Let $b, b'$ be input bags, both of size $N$, and let $c$ a possible output bag (of $K^\star$). Observe that both output bags determined by $K^\star(b_1), K^\star(b_2)$ and  $c$ also have size $N$. We shall show that \Eqn{e1116} is satisfied for the set containing the singleton element $c$ and multiplier $\epsilon N$, from which it follows that \Eqn{e1116} is satisfied for all sets $Z$. 
 
By  Birkhoff-von Neumann's theorem (\cite{Konig:36}, \Thm{t1244}), in the case where all bags have the same size,  the minimisation problem in \Def{d0938} is optimised for transportation matrix $F$ where  all values $F_{ij}$ are either $0$ or $1/N$. This implies that  the optimal transportation for $\Earth{d_{\cal X}}(b, c)$ is achieved by moving each word in the bag $b$ to a (single) word in bag $c$. The same is true for $\Earth{d_{\cal X}}(b', c)$ and $\Earth{d_{\cal X}}(b, b')$.  Next we use a vector representation of bags as follows. For bag $b$, we write $\underline{b}$ for a vector in ${\cal X}^N$ such that each element in $b$ appears at some  $\underline{b}_i$.

Next we fix $\underline{b}$ and $\underline{b'}$ to be vector representations  of respectively $b, b'$ in ${\cal X}^N$ such that the optimal transportation for $\Earth{d_{\cal X}}(b, b')$ is 
\begin{equation}\label{e1249}
\Earth{d_{\cal X}}(b, b') \Wide{=}1/N {\times}\sum_{1\leq i\leq N} d_{\cal X}(\underline{b}_i, \underline{b}'_i) \Wide{=} \Manhattan{d_{\cal X}}(\underline{b}, \underline{b}')/N~.
\end{equation}

The final fact we need is to note that there is a relationship between $K^\star$ acting on bags of size $N$ and $\underline{K}^\star$ which acts on vectors in ${\cal X}^N$ by applying $K$ independently to each component of a vector: it is characterised in the following way. Let $b,  c$ be bags and let $\underline{b}, \underline{c}$ be any vector representations. For permutation $\sigma \in \{1 \dots N\}\Fun \{1\dots N\}$ write $\underline{c}^\sigma$ to be the vector with components permuted by $\sigma$, so that $\underline{c}^\sigma_i = \underline{c}_{\sigma(i)}$. With these definitions, the following equality between probabilities holds:
\begin{equation}\label{e1229}
K^\star(b)\{c\} \Wide{=} \sum_{\sigma}\underline{K}^\star(\underline{b})\{\underline{c}^\sigma\}~,
\end{equation}
where the summation is over all permutations that give distinct vector representations of $c$. We now compute directly:

\begin{Reason}
\Step{}
{K^\star(b)\{c\}}
\StepR{$=$}{\Eqn{e1229} for $b, c$}
{\sum_{\sigma}\underline{K}^\star(\underline{b})\{\underline{c}^\sigma\}}
\StepR{$\leq$}{\Eqn{e1939} for $\underline{b}, \underline{b'}, \underline{c}$}
{\sum_{\sigma}e^{\epsilon\Manhattan{d}(\underline{b}, \underline{b}')}\underline{K}^\star(\underline{b}')\{\underline{c}^\sigma\}}
\StepR{$=$}{Arithmetic and \Eqn{e1249}}
{e^{\epsilon N\Earth{d}(\underline{b}, \underline{b}')}{\sum_{\sigma}\underline{K}^\star(\underline{b}')\{\underline{c}^\sigma\}}}
\StepR{$=$}{\Eqn{e1229} for $b', c$}
{e^{\epsilon N\Earth{d}(\underline{b}, \underline{b}')}K^\star(b')\{c\}~,}
\end{Reason}
as required.


%
%
%
\end{proof}
\end{theorem}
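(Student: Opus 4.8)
The plan is to reduce the statement to a pointwise bound on individual output bags and then import the vector-level guarantee \Eqn{e1939} through the bag/vector correspondence. Since the target multiplier $e^{\epsilon N \Earth{d_{\cal X}}(b,b')}$ does not depend on the output, it suffices to establish
\[
K^\star(b)\{c\} \Wide{\le} e^{\epsilon N \Earth{d_{\cal X}}(b,b')}\, K^\star(b')\{c\}
\]
for each fixed output bag $c$; integrating (or summing) this density-level inequality over all $c \in Z$ then recovers \Eqn{e1116} for an arbitrary measurable $Z$. So I would fix $b, b'$ of size $N$ and an arbitrary $c$, which is itself of size $N$ because $K^\star$ replaces each of the $N$ words of $b$ by exactly one word.

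Next I would turn the Earth Mover's optimisation into a matching problem. By Birkhoff--von Neumann, for equal-size bags the minimisation in \Def{d0938} is attained at a flow matrix whose entries all lie in $\{0, \tfrac{1}{N}\}$, so the optimum is $\tfrac{1}{N}$ times the cost of a minimum-weight perfect matching between the $N$ words of $b$ and those of $b'$. I would then pick vector representations $\underline b, \underline b' \in {\cal X}^N$ that place this optimal matching on the diagonal, pairing $\underline b_i$ with $\underline b'_i$; this is exactly \Eqn{e1249} and gives $\Manhattan{d_{\cal X}}(\underline b, \underline b') = N\, \Earth{d_{\cal X}}(b,b')$. The purpose of this alignment is that for these particular representatives the Manhattan distance collapses to $N$ times the Earth Mover's distance, so the vector-level exponent in \Eqn{e1939} is precisely the bag-level exponent I am aiming for.

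The heart of the argument is the bridge \Eqn{e1229}: with $\underline b$ fixed, $K^\star(b)\{c\}$ equals $\sum_\sigma \underline K^\star(\underline b)\{\underline c^\sigma\}$, the sum over the distinct orderings $\underline c^\sigma$ of a fixed representative $\underline c$ of the bag $c$. Given this, the computation is a short chain (as displayed in the sketch): rewrite $K^\star(b)\{c\}$ by \Eqn{e1229}; bound each summand termwise using \Eqn{e1939} for the fixed pair $(\underline b, \underline b')$ and the common output vector $\underline c^\sigma$, which introduces the constant factor $e^{\epsilon \Manhattan{d_{\cal X}}(\underline b,\underline b')}$; rewrite this factor as $e^{\epsilon N \Earth{d_{\cal X}}(b,b')}$ by \Eqn{e1249} and pull it out of the sum since it is independent of $\sigma$; and finally re-fold $\sum_\sigma \underline K^\star(\underline b')\{\underline c^\sigma\}$ into $K^\star(b')\{c\}$ using \Eqn{e1229} for $b'$.

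The main obstacle I would expect is establishing the bridge \Eqn{e1229} rigorously and using it consistently on both sides. Two points need care: first, that summing over permutations yielding \emph{distinct} vectors counts each realisation of the multiset $c$ exactly once, so the bag probabilities are neither over- nor under-counted when $c$ has repeated words; and second, that the diagonal alignment chosen for $b, b'$ is held fixed while only the representative of $c$ is permuted, so that \Eqn{e1939} applies with the same exponent $\Manhattan{d_{\cal X}}(\underline b,\underline b')$ for every term in the sum. Once this correspondence is set up correctly, the reduction to singletons and the surrounding arithmetic are routine.
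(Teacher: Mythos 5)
Your proposal is correct and follows essentially the same route as the paper's own proof: reduction to singleton output bags, Birkhoff--von Neumann to obtain a matching-based optimal flow and the diagonal alignment giving $\Manhattan{d_{\cal X}}(\underline b,\underline b') = N\,\Earth{d_{\cal X}}(b,b')$, the bag-to-vector bridge \Eqn{e1229}, and the termwise application of the vector-level guarantee \Eqn{e1939}. The two care points you flag (counting distinct permutations exactly once, and keeping the alignment of $\underline b, \underline b'$ fixed while only the representative of $c$ is permuted) are precisely what the paper's appendix formalises in its two supporting lemmas, so nothing is missing.
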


\subsection{Application to Text Documents}

Recall the bag-of-words 
\[
              b_2 \Wide{\Defs}  \langle \text{Chief}^1, \text{speaks}^1, \text{media}^1, \text{Illinois}^1\rangle ~,
\]
and assume we are provided with a mechanism $K$ satisfying the standard $\epsilon d_\mathcal{X}$-privacy property \Eqn{e0848} for individual words. As in \Thm{emprivacy} we can create a mechanism $K^*$ by applying $K$ independently to each word in the bag,  so that, for example the probability of outputting  $b_3 = {\langle \text{Chef}^1, \text{breaks}^1, \text{cooking}^1, \text{record}^1\rangle}$  is determined by \Eqn{e1229}:
\[
K^\star(b_2)(\{b_3\}) \Wide{=} \sum_{\sigma}\prod_{1\leq i\leq 4}K({b_2}_i)\{{{\underline{b}_3}_i}^\sigma\}~.
\]

By \Thm{emprivacy},  $K^\star$ satisfies $4\epsilon E_{d_{\cal S}}$-privacy. Recalling \Eqn{e1023} that  $E_{d_{\cal S}}(b_1, b_2) = 2.816$, we deduce that
if $\epsilon \sim 1/16$ then the output distributions  $K^\star(b_1)$ and $K^\star(b_2)$ would differ by the multiplier $e^{2.816{\times 4}/16}\sim 2.02$; but if  $\epsilon \sim 1/32$ those distributions differ by only $1.42$. In the latter case it means that the outputs of $K^\star$ on $b_1$ and $b_2$ are almost indistinguishable. 

The parameter $\epsilon$ depends on the randomness implemented in the basic mechanism $K$; we investigate that further in \Sec{s1542}.

\subsection{Properties of Earth Mover's Privacy}\label{s0949}

\newcommand{\Euclidean}[1]{|\!| #1|\!|}
\newcommand{\newManhattan}[1]{\lfloor #1\rfloor}
\newcommand{\myCosine}[2]{\wr #1 - #2 \wr}

In machine learning a number of ``distance measures'' are used in classification or clustering tasks, and in this section we explore some properties of privacy when we vary the underlying metrics of an Earth Mover's metric used to classify complex objects.

Let $v, v' \in \Real^n$ be real-valued $n$-dimensional vectors. We use the following (well-known) metrics. Recall in our applications we have looked at bags-of-words, where the words themselves are represented as $n$-dimensional vectors.
\footnote{As we shall see, in the machine learning analysis \emph{documents} are represented as bags of $n$-dimensional vectors (word embeddings), where each bag contains $N$ such vectors.}
\begin{enumerate}
\item Euclidean: \quad $\Euclidean{v{-}v'} \Wide{\Defs} \sqrt {\sum_{1\leq i \leq n}(v_i - v'_i)^2}$\\
\item Manhattan: \quad $\newManhattan{v{-}v'} \Wide{\Defs} {\sum_{1\leq i \leq n}|v_i - v'_i|}$\\
\end{enumerate}
%
%
%
%
Note that the Euclidean and Manhattan distances determine pseudometrics on words as defined at \Def{d0925} and proved at \Lem{l1227}.

%

\begin{lemma}{}
If $d_\mathcal{X} \le d_\mathcal{X'}$ (point-wise), then $E_{d_\mathcal{X}} \le E_{d_\mathcal{X'}}$ (point-wise).
\label{lem2}

\begin{proof}
Trivial, by contradiction. If $d_\mathcal{X} \le d_\mathcal{X'}$ and $F_{ij}, F^\star_{ij}$ are the minimal flow matrices for $E_{d_\mathcal{X}},  E_{d_\mathcal{X'}}$ respectively, then $F^\star_{ij}$ is a (strictly smaller) minimal solution for $E_{d_\mathcal{X}}$ which contradicts the minimality of $F_{ij}$.
\end{proof}
\end{lemma}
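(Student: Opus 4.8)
The plan is to exploit a single structural observation: in the Earth Mover's optimisation of \Def{d0938}, the feasible region — the set of flow matrices $F$ satisfying the constraints \Eqn{const1} — depends only on the two bags $X, Y$ and \emph{not} on the underlying (pseudo)metric. Only the linear objective $\sum_{i,j} d(x_i,y_j)F_{ij}$ sees the metric, and it does so monotonically in the cost coefficients $d(x_i,y_j)$ precisely because every $F_{ij}\ge 0$. So the claim reduces to the elementary fact that raising the cost coefficients of a linear program while keeping the constraints fixed can only raise the optimum. I would therefore avoid the contradiction phrasing and argue directly.

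Concretely, first I would fix arbitrary bags $X, Y$ and let $F^\star$ denote an optimal flow matrix for the right-hand problem $E_{d_\mathcal{X'}}(X,Y)$ (such an optimiser exists, since the feasible region cut out by \Eqn{const1} is a nonempty compact polytope and the objective is linear). Since $F^\star$ satisfies \Eqn{const1}, and those constraints are identical for both problems, $F^\star$ is also a \emph{feasible} — though not necessarily optimal — flow matrix for the left-hand problem $E_{d_\mathcal{X}}(X,Y)$. Because $E_{d_\mathcal{X}}(X,Y)$ is the minimum of its objective over all feasible matrices, evaluating that objective at the particular matrix $F^\star$ yields an upper bound:
\[
E_{d_\mathcal{X}}(X,Y) \Wide{\le} \sum_{i,j} d_\mathcal{X}(x_i,y_j)\,F^\star_{ij}~.
\]

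Next I would apply the pointwise hypothesis $d_\mathcal{X}\le d_\mathcal{X'}$ together with $F^\star_{ij}\ge 0$ termwise, and finally recognise the resulting sum as the optimum of the right-hand problem, since $F^\star$ was chosen optimal there:
\[
\sum_{i,j} d_\mathcal{X}(x_i,y_j)\,F^\star_{ij} \Wide{\le} \sum_{i,j} d_\mathcal{X'}(x_i,y_j)\,F^\star_{ij} \Wide{=} E_{d_\mathcal{X'}}(X,Y)~.
\]
Chaining the two displays gives $E_{d_\mathcal{X}}(X,Y)\le E_{d_\mathcal{X'}}(X,Y)$, and as $X, Y$ were arbitrary this is the desired pointwise inequality. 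There is essentially no obstacle here; the only point requiring care is not to confuse the two optimisers. One must plug the optimiser $F^\star$ of the \emph{larger}-cost problem into the \emph{smaller}-cost problem (never the reverse), so that the ``$\min$'' step and the cost-monotonicity step point in the same direction — getting this backwards is the one way the argument can fail.
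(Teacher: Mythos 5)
Your proposal is correct and is essentially the paper's own argument: both hinge on evaluating the $d_\mathcal{X}$-objective at the optimal flow $F^\star$ for $E_{d_\mathcal{X'}}$, which is feasible for the $d_\mathcal{X}$-problem because the constraints \Eqn{const1} are metric-independent, and then using $F^\star_{ij}\ge 0$ together with the pointwise bound $d_\mathcal{X}\le d_\mathcal{X'}$. The only difference is presentational --- the paper wraps this inequality chain in a (rather loosely worded) proof by contradiction, whereas you state it directly, which is arguably the cleaner rendering of the same idea.
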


\begin{corollary}{}
If $d_\mathcal{X} \le d_\mathcal{X'}$ (point-wise), then $E_{d_\mathcal{X}}$-privacy implies $E_{d_\mathcal{X'}}$-privacy.
\label{cor1}
\end{corollary}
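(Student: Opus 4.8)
The plan is to reduce the claim to \Lem{lem2} together with monotonicity of the exponential; there is essentially no work beyond chaining two inequalities. First I would invoke \Lem{lem2} to extract the key ordering: since $d_\mathcal{X} \le d_\mathcal{X'}$ pointwise, we already know $E_{d_\mathcal{X}} \le E_{d_\mathcal{X'}}$ pointwise on $\mathbb{B}\mathcal{X}$, so in particular $E_{d_\mathcal{X}}(b, b') \le E_{d_\mathcal{X'}}(b, b')$ for every pair of bags $b, b' \in \mathbb{B}\mathcal{X}$.

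Next I would fix an arbitrary $\epsilon \ge 0$ and assume $K$ satisfies $\epsilon E_{d_\mathcal{X}}$-privacy in the sense of \Def{d1857}. Because $\epsilon \ge 0$ and $t \mapsto e^t$ is monotone increasing, the ordering above lifts to the privacy multipliers, giving $e^{\epsilon E_{d_\mathcal{X}}(b, b')} \le e^{\epsilon E_{d_\mathcal{X'}}(b, b')}$. Then for any $b, b' \in \mathbb{B}\mathcal{X}$ and any $Z \subseteq \mathbb{B}\mathcal{X}$ I would chain the privacy hypothesis with this inequality:
\begin{equation*}
K(b)(Z) \Wide{\le} e^{\epsilon E_{d_\mathcal{X}}(b, b')} K(b')(Z) \Wide{\le} e^{\epsilon E_{d_\mathcal{X'}}(b, b')} K(b')(Z)~,
\end{equation*}
which is precisely the statement that $K$ satisfies $\epsilon E_{d_\mathcal{X'}}$-privacy, with the same parameter $\epsilon$.

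The only conceptual point worth flagging — and the one place where the direction could be gotten backwards — is that a pointwise-smaller underlying metric produces a pointwise-smaller Earth Mover's metric and hence a \emph{smaller}, i.e.\ tighter, privacy multiplier; so $E_{d_\mathcal{X}}$-privacy is the \emph{stronger} of the two conditions, and the implication correctly runs from the smaller-metric guarantee to the larger-metric one rather than the reverse. I expect no genuine obstacle: the result is an immediate corollary of \Lem{lem2}, and the hypothesis $\epsilon \ge 0$ is the single fact that must be used, since it is what makes the exponential order-preserving on the relevant arguments.
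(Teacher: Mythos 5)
Your proof is correct and is exactly the argument the paper intends: the corollary is stated without proof precisely because it follows immediately from \Lem{lem2} by chaining the $\epsilon E_{d_\mathcal{X}}$-privacy inequality of \Def{d1857} with the pointwise bound $E_{d_\mathcal{X}}(b,b') \le E_{d_\mathcal{X'}}(b,b')$ and the monotonicity of $t \mapsto e^{\epsilon t}$ for $\epsilon \ge 0$. Your remark about the direction of the implication (smaller metric gives the tighter multiplier, hence the stronger guarantee) is also the right sanity check, and it matches the paper's follow-up observation that Euclidean-based privacy implies Manhattan-based privacy.
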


This shows that, for example, $E_{\Euclidean{\cdot}}$-privacy implies  $E_{\newManhattan{\cdot}}$-privacy, and indeed any distance measure $d$ which exceeds the Euclidean distance then  $E_{\Euclidean{\cdot}}$-privacy implies  $E_{d}$-privacy. 

\medskip

We end this section by noting that \Def{d1857} satisfies  \emph{post-processing}; i.e. that privacy does not decrease under post processing.
We write $K ; K'$ for the composition of mechanisms $K, K': \mathbb{B}\mathcal{X} \rightarrow \mathbb{D}(\mathbb{B}\mathcal{X})$, defined:

\begin{equation}\label{d1143}
(K ; K')(b)(Z) \Wide{\Defs} \sum_{b' \In  \mathbb{B}\mathcal{X}} K(b)(\{ b'\}){\times} K'(b')(Z)~.
\end{equation}

\begin{restatable}{lemma}{postlemma}
\label{l1112}
[Post processing]  If $K, K' \In \mathbb{B}\mathcal{X} \rightarrow \mathbb{D}(\mathbb{B}\mathcal{X})$ and $K$ is  $\epsilon E_{d_{\cal X}}$-private  for (pseudo)metric $d$ on ${\cal X}$ then $K;K'$ is $\epsilon E_{d_{\cal X}}$-private. 
\end{restatable}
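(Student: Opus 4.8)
The plan is to run the standard post-processing argument for differential privacy, specialised to the Earth Mover's setting. The key observation is that the privacy guarantee on $K$ from \Def{d1857}, namely $K(b_1)(W) \le e^{\epsilon E_{d_{\cal X}}(b_1,b_2)} K(b_2)(W)$, is quantified over \emph{every} (measurable) set $W$, and so in particular holds for the singletons $\{b'\}$ appearing in the defining sum \Eqn{d1143} for $K;K'$. Since the elements of $\mathbb{B}\mathcal{X}$ are finite bags over an (ultimately) finite alphabet, this set is countable and the sum over singletons is well defined.

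First I would fix arbitrary $b_1, b_2 \in \mathbb{B}\mathcal{X}$ and $Z \subseteq \mathbb{B}\mathcal{X}$ and expand $(K;K')(b_1)(Z)$ using \Eqn{d1143} as $\sum_{b'} K(b_1)(\{b'\}) {\times} K'(b')(Z)$. Applying the privacy of $K$ to each singleton $\{b'\}$ gives $K(b_1)(\{b'\}) \le e^{\epsilon E_{d_{\cal X}}(b_1,b_2)} K(b_2)(\{b'\})$; and since $K'(b')(Z) \ge 0$ for every $b'$, I can multiply through by this factor and sum term-by-term without reversing the inequality.

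The final step is to factor the constant $e^{\epsilon E_{d_{\cal X}}(b_1,b_2)}$ --- which does not depend on the summation index $b'$ --- out of the sum, leaving $e^{\epsilon E_{d_{\cal X}}(b_1,b_2)} \sum_{b'} K(b_2)(\{b'\}) {\times} K'(b')(Z)$, which by \Eqn{d1143} is exactly $e^{\epsilon E_{d_{\cal X}}(b_1,b_2)}\,(K;K')(b_2)(Z)$. This establishes \Eqn{e1116} for $K;K'$ with the same multiplier, completing the proof.

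There is essentially no hard obstacle here: the result follows from the non-negativity of probabilities and the fact that the privacy multiplier depends only on $b_1,b_2$ and not on the intermediate output $b'$. The only points requiring a little care are (i) instantiating the privacy inequality at singleton sets, which is immediate from the universal quantification over $Z$ in \Def{d1857} together with the countability of $\mathbb{B}\mathcal{X}$, and (ii) interchanging the inequality with the (possibly infinite) sum, which is valid precisely because every term is non-negative and the bound is applied term-wise. Notably the argument uses no property of $K'$ beyond $K'(b')(Z)\ge 0$, so post-processing holds for an \emph{arbitrary} second mechanism $K'$, exactly as one expects for a differential-privacy-style guarantee.
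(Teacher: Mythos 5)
Your proof is correct and matches the paper's own argument exactly: both expand $(K;K')(b)(Z)$ via the composition sum in \Eqn{d1143}, apply the $\epsilon E_{d_{\cal X}}$-privacy of $K$ term-wise at the singletons $\{b'\}$, and factor the constant multiplier $e^{\epsilon E_{d_{\cal X}}(b,c)}$ out of the sum to reassemble $(K;K')(c)(Z)$. Your additional remarks on countability, non-negativity, and the fact that nothing is required of $K'$ are sound elaborations of steps the paper leaves implicit, but they do not change the route.
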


\section{Earth Mover's Privacy for bags of vectors in $\mathbb{R}^n$}\label{s1542}

\begin{figure}[htbp]
\centering
\begin{minipage}{.46\textwidth}
  \centering
  \includegraphics[width=.8\linewidth]{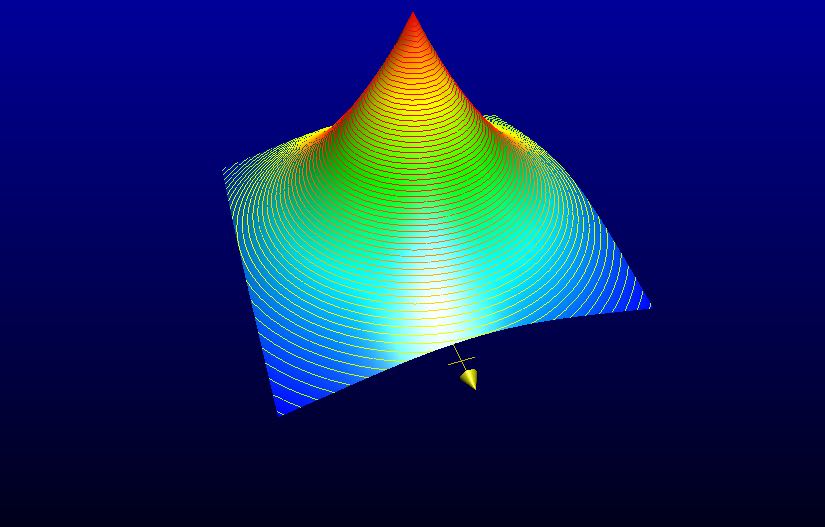}
  
  {3D plot}
\end{minipage}%
\begin{minipage}{.46\textwidth}
  \centering
  \includegraphics[width=.8\linewidth]{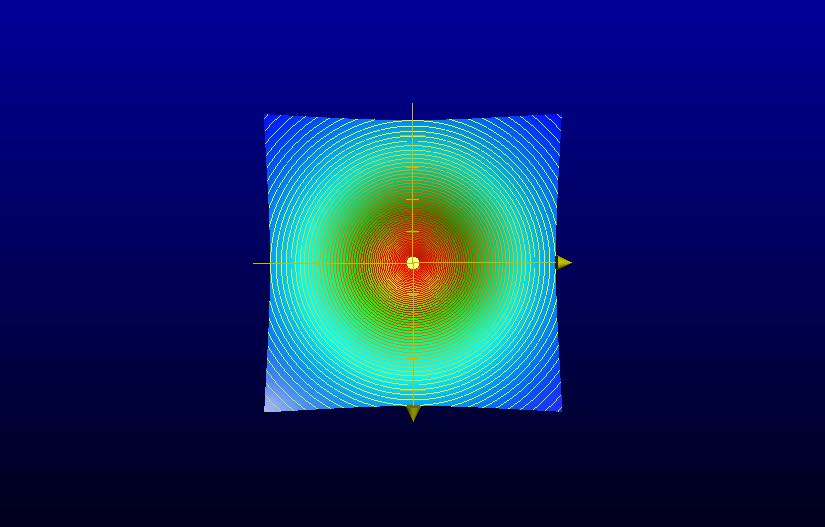}
  {\hspace{0.1cm}Contour diagram}
\end{minipage}
%
\caption{Laplace density function $\Lap{2}{\epsilon}$ in $\Real^2$}\label{f1059}
\end{figure}

In \Thm{emprivacy} we have shown how to promote a privacy mechanism on components to $E_{d_{\cal X}}$-privacy on a bag of those components. In this section we show how to implement a privacy mechanism satisfying \Eqn{e0848}, when the components are represented by high dimensional vectors in $\Real^n$ and the underlying metric is taken Euclidean on $\Real^n$, which we denote by $\Euclidean{\cdot}$.  

\medskip

We begin by summarising the basic probabilistic tools we need.  A \emph{probability density function} (PDF) over some domain ${\cal D}$ is a function $\phi: {\cal D} \Fun [0,1]$ whose value $\phi(z)$ gives the ``relative likelihood'' of $z$. 
The probability density function is used to compute the probability of an outcome ``$z\in A$'', for some region $A \subseteq {\cal D}$ as follows:
\begin{equation}\label{e1736}
\int_A \phi(x) ~dx~. 
\end{equation}

 In differential privacy, a popular density function used for implementing mechanisms is the \emph{Laplacian}, defined next.

\begin{definition}\label{d1141}
Let $n\geq 0$ be an integer $\epsilon>0$ be a real, and $v {\in} \Real^n$. We define the Laplacian probability density function in $n$-dimensions: 
\[
\Lap{n}{\epsilon}({v}) \Wide{\Defs} c_n^\epsilon{\times}e^{-\epsilon |\!|v|\!|}~,
\]
where $|\!|v|\!| = \sqrt{(v_1^2 + \dots +v_n^2)}$, and  $c_n^\epsilon$ is a real-valued constant satisfying the integral equation $1= \int \dots \int_{\Real^n} \Lap{n}{\epsilon}({v}) dv_1 \dots dv_n$.
\end{definition}

When $n=1$, we can compute $c_1^\epsilon= \epsilon/2$, and when $n=2$, we have that $c_2^\epsilon=\epsilon^2/2\pi$.

In privacy mechanisms, probability density functions are used to produce a ``noisy'' version of the released data. The benefit of the Laplace distribution is that, besides creating randomness, the likelihood that the released value is different from the true value decreases exponentially. This implies that the utility of the data release is high, whilst at the same time masking its actual value.  In \Fig{f1059} the probability density function $\Lap{2}{\epsilon}(v)$ depicts this situation, where we see that the highest relative likelihood of a randomly selected point on the plane being close to the origin, with the chance of choosing more distant points diminishing rapidly.  Once we are able to select a vector $v'$ in $\Real^n$ according to $\Lap{n}{\epsilon}$, we can ``add noise'' to any given vector $v$ as $v{+}v'$, so that the true value $v$ is highly likely to be perturbed only a small amount.   

\medskip


In order to use the Laplacian in \Def{d1141}, we need to implement it.  Andr\'es et al.\ \citep{andres2013geo} exhibited a mechanism  for $\Lap{2}{\epsilon}(v)$, and here we show how to extend that idea to the general case. The main idea of the construction for  $\Lap{2}{\epsilon}(v)$ uses the fact that any vector on the plane can be represented by spherical coordinates $(r, \theta)$, so that the probability of selecting a vector distance no more than $r$ from the origin can be achieved by selecting $r$ and $\theta$ independently. In order to obtain a distribution which overall is equivalent to $\Lap{2}{\epsilon}(v)$, Andr\'es et al.\ computed that $r$ must be selected according to a well-known distribution called the ``Lambert W'' function, and $\theta$ is selected uniformly over the unit circle. In our generalisation to  $\Lap{n}{\epsilon}(v)$, we observe that the same idea is valid \citep{boisbunon2012class}. Observe first that every vector in $\Real^n$ can be expressed as a pair $(r, p)$, where $r$ is the distance from the origin, and $p$ is a point in $B^n$, the unit \emph{hypersphere} in $\Real^n$. Now selecting vectors according to $\Lap{n}{\epsilon}(v)$ can be achieved by independently selecting $r$ and $p$, but this time 
$r$ must be selected according to the \emph{Gamma distribution}, and $p$ must be selected uniformly over $B^n$. We set out the details next.

\begin{definition}\label{d1732}
The \emph{Gamma distribution} of  (integer) shape $n$ and scale $\delta>0$ is determined by the probability density function:
\begin{equation}\label{e1745}
\Gam{n}{\delta}(r) \Wide{\Defs} \frac{ r^{n{-}1}e^{-r/\delta}}{\delta^n(n{-}1!)}~.
\end{equation}
\end{definition}

\begin{definition}\label{d1805}
The uniform distribution over the surface of the unit hypersphere $B^n$ is determined by the probability density function:
\begin{equation}\label{e1746}
\Unif{n}(v) \Wide{\Defs}  ~\frac{\Gamma(\frac{n}{2})}{n\pi^{n/2}} ~\textit{if} ~~ v \in B^n ~~ \textit{else} ~~ 0~,
\end{equation}
where $B^n \Defs \{v\in \Real^n ~|~  |\!|v|\!| =1\}$, and $\Gamma(\alpha) \Defs \int_0^\infty x^{\alpha{-}1}e^{-x} ~dx$ is the ``Gamma function''.
\end{definition}

With \Def{d1732} and \Def{d1805} we are able to provide an implementation of a mechanism which produces noisy vectors around a given vector in $\Real^n$ according to the Laplacian distribution in \Def{d1141}.  The first task is to show that our decomposition of  $\Lap{n}{\epsilon}$ is correct.


\begin{restatable}{lemma}{laplemma}
\label{l1328}
The $n$-dimensional Laplacian $\Lap{n}{\epsilon}(v)$ can be realised by selecting vectors represented as $(r, p)$, where $r$  is selected according to  $\Gam{n}{1/\epsilon}(r)$ and $p$ is selected independently according to $\Unif{n}(p)$.
\end{restatable}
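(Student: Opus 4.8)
The plan is to compute directly the probability density on $\Real^n$ of the random vector $v = r\,p$ produced by drawing $r$ and $p$ independently, and to verify that it coincides with $\Lap{n}{\epsilon}$ of \Def{d1141}. The only real tool needed is the passage to hyperspherical coordinates: every nonzero $v\in\Real^n$ factors uniquely as $v = r\,p$ with $r = |\!|v|\!| \in (0,\infty)$ and $p = v/|\!|v|\!| \in B^n$, and under this substitution the Lebesgue measure splits as $dv_1\cdots dv_n = r^{\,n-1}\,dr\,d\sigma(p)$, where $d\sigma$ is the surface measure on the unit hypersphere $B^n$, whose total mass is the sphere area $\tfrac{2\pi^{n/2}}{\Gamma(n/2)}$. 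This geometric factorisation, and in particular the Jacobian weight $r^{\,n-1}$, is the one nontrivial input; it is the crux of the whole argument.

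First I would use the independence hypothesis to write the joint law of $(r,p)$ as the product of the two prescribed densities, $\Gam{n}{1/\epsilon}(r)\,dr$ on the radial part and $\Unif{n}(p)\,d\sigma(p)$ on the angular part. Pushing this product measure forward along $(r,p)\mapsto r\,p$ and dividing by the Jacobian factor identified above, the induced Cartesian density at $v$ (with $r=|\!|v|\!|$ and $p=v/|\!|v|\!|$) is
\[
f(v) \Wide{=} \frac{\Gam{n}{1/\epsilon}(|\!|v|\!|)}{|\!|v|\!|^{\,n-1}}\;\Unif{n}\!\left(\tfrac{v}{|\!|v|\!|}\right)~.
\]

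Next I would substitute the explicit Gamma density of \Def{d1732} at scale $\delta = 1/\epsilon$, namely $\Gam{n}{1/\epsilon}(r) = \tfrac{\epsilon^n\,r^{\,n-1}e^{-\epsilon r}}{(n{-}1)!}$. The decisive step is that the $r^{\,n-1}$ in this numerator exactly cancels the Jacobian factor $|\!|v|\!|^{\,n-1}$ in the denominator, while $\Unif{n}$, being constant on $B^n$ by \Def{d1805}, contributes only a fixed scalar. What survives is
\[
f(v) \Wide{=} \frac{\epsilon^n}{(n{-}1)!}\,\Unif{n}\;e^{-\epsilon |\!|v|\!|}~,
\]
which already has the Laplacian form $c\cdot e^{-\epsilon|\!|v|\!|}$.

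To close the argument I would invoke uniqueness of normalisation. Since $f$ is the pushforward of a product of two genuine probability measures (the Gamma and the uniform distributions each integrate to $1$), $f$ is automatically a probability density on $\Real^n$; and by \Def{d1141} the Laplacian $\Lap{n}{\epsilon}$ is the unique probability density of the shape $c_n^\epsilon\,e^{-\epsilon|\!|v|\!|}$. Two normalised densities of the same functional form must coincide, so $f = \Lap{n}{\epsilon}$, proving the claim; one may also check directly that the constant $\tfrac{\epsilon^n}{(n{-}1)!}\,\Unif{n}$ reproduces the stated values $\epsilon/2$ for $n{=}1$ and $\epsilon^2/2\pi$ for $n{=}2$, using that $\Unif{n}$ is the reciprocal of the sphere area. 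I expect the only point needing care to be the hyperspherical coordinate change and its $r^{\,n-1}$ weight; once that is in hand, the cancellation and the normalisation remark are routine.
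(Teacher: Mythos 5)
Your proof is correct; it establishes the same decomposition as the paper but runs the argument in the opposite direction, and the difference is worth noting. The paper starts from the normalisation identity $\int_{\Real^n}\Lap{n}{\epsilon}(v)\,dv=1$, changes to hyperspherical coordinates with the same Jacobian weight $r^{n-1}$ you use, factorises the integral into a radial part times an angular part, and then identifies the radial factor as $\Gam{n}{1/\epsilon}$ by computing its normalising constant $c_0=\epsilon^n/(n-1)!$ explicitly (integration by parts plus induction), the angular factor being recognised as the uniform law on $B^n$. You instead push the product law of $(r,p)$ forward along $(r,p)\mapsto rp$ and read off the Cartesian density, letting the $r^{n-1}$ of the Gamma density cancel the Jacobian, and then close with a normalisation-uniqueness remark: since the pushforward is automatically a probability density and is proportional to $e^{-\epsilon|\!|v|\!|}$, it must coincide with $\Lap{n}{\epsilon}$ by \Def{d1141}. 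The two arguments pivot on the same single nontrivial ingredient (the polar decomposition of Lebesgue measure), but yours is leaner: it avoids the explicit integration by parts, avoids the paper's preliminary appeal to the stochastic representation $X=RU$ of spherically symmetric laws, and yields the constant $c_n^\epsilon=\epsilon^n/\bigl((n-1)!\,S_{n-1}(1)\bigr)$ of \Cor{c1234} for free, since your $\tfrac{\epsilon^n}{(n-1)!}\Unif{n}$ is exactly that. What the paper's more computational route buys is the explicit radial machinery (the Gamma normalising constant and, subsequently, its CDF) that it reuses downstream in \Lem{l1600} and the utility bound of \Thm{l1534}; your shortcut does not provide that, though it is not needed for this lemma. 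One incidental point: your reading of $\Unif{n}$ as the reciprocal of the surface area of $B^n$ is the intended one --- it is what makes your sanity checks $c_1^\epsilon=\epsilon/2$ and $c_2^\epsilon=\epsilon^2/2\pi$ come out right, and it matches \Cor{c1234} --- whereas the constant as printed in \Def{d1805} contains a typo ($\Gamma(\tfrac{n}{2})$ should read $\Gamma(1+\tfrac{n}{2})$), which your derivation incidentally exposes.
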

\begin{proof} (Sketch)
The proof follows by changing variables to spherical coordinates and then showing that $\int_A \Lap{n}{\epsilon}(v)~dv$ can be expressed as the product of independent selections of $r$ and $p$.

We use a spherical-coordinate representation of $v$ as:
\[
\begin{array}{l}
r \Gets |\!|v|\!|~, ~\textit{and}~ \\
v_1 \Gets r \cos\theta_1~,~  v_2 \Gets r \sin\theta_1 \cos\theta_2 ~, \dots     v_n \Gets r \sin\theta_1 \sin\theta_2 \ldots ~,\sin\theta_{n-2} \sin\theta_{n{-}1} ~.
\end{array}
\]

Next we assume for simplicity that $A$ is a hypersphere of radius $R$; with that we can reason:
\begin{Reason}
\Step{}
{\int_A \Lap{n}{\epsilon}(v)~dv}
\WideStepR{$=$}{\Def{d1141}; $A$ is a hypersphere}
{\int_{|\!|v|\!| \leq R} c_n^\epsilon{\times}e^{-\epsilon |{v}|} ~dv}
\WideStepR{$=$}{$|\!|v|\!| = \sqrt{v_1^2 + \dots + v_n^2}$}
{\int_{|\!|v|\!| \leq R} c_n^\epsilon{\times}e^{-\epsilon \sqrt{v_1^2 + \dots + v_n^2}} ~dv}
\WideStepR{$=$}{Change of variables to spherical coordinates; see below \Eqn{e1426}}
{\int_{r \leq R} \int _{A_{\theta}}c_n^\epsilon{\times}e^{-\epsilon r}\frac{\partial(z_1, z_2, \ldots, z_n)}{\partial(r, \theta_1, \ldots, \theta_{n-1})}  ~dr d\theta_1\dots d\theta_{n{-}1}}
\WideStepR{$=$}{See below \Eqn{e1426}}
{\int_{r \leq R} \int _{A_{\theta}}c_n^\epsilon{\times}e^{-\epsilon r}  r^{n-1} \sin^{n-2}\theta_1 \sin^{n-3}\theta_2 \ldots \sin^2\theta_{n-3} \sin\theta_{n-2} ~dr d\theta_1\dots d\theta_{n{-}1}
~.}
\end{Reason}

Now rearranging we can see that this becomes a product of two integrals. The first $\int_{r\leq R} e^{-\epsilon r}  r^{n-1}$ is over the radius, and is proportional to the integral of the Gamma  distribution \Def{d1732}; and the second is an integral over the angular coordinates and is proportional to the surface of the unit hypersphere, and corresponds to the PDF at \Eqn{d1805}. We complete the details in the appendix \App{A0476}.
Finally, for the ``see below's'' we are using the ``Jacobian'' with details given at \App{A0476}: 
\begin{equation}\label{e1426}
    \frac{\partial(z_1, z_2, \ldots, z_n)}{\partial(r, \theta_1, \ldots, \theta_{n-1})} = r^{n-1} \sin^{n-2}\theta_1 \sin^{n-3}\theta_2 \ldots 
\end{equation}
\end{proof}

\medskip

We can now assemble the facts to demonstrate the n-Dimensional Laplacian.


\begin{restatable}[n-Dimensional Laplacian]{theorem}{laplace}
\label{thm:laplace}
Given $\epsilon >0$ and $n \in \mathbb{Z}^+$, let $K: \mathbb{R}^n \rightarrow \mathbb{D}\mathbb{R}^n$ be a mechanism that, given a vector $x \in \Real^n$ outputs a noisy value as follows:
 \[
 x \stackrel{K}{\longmapsto} x + x'
 \]
where $x'$ is represented as $(r, p)$ with $r\geq 0$, distributed according to $\Gam{n}{1/\epsilon}(r)$ and $p \in B^n$ distributed according to $\Unif{n}(p)$. 
Then  $K$ satisfies \Eqn{e0848} from \Thm{emprivacy}, i.e.\  $K$ satisfies $\epsilon |\!|\!\cdot \!|\!|$-privacy where $|\!|\!\cdot \!|\!|$ is the Euclidean metric on $\Real^n$.
\end{restatable}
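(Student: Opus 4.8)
The plan is to reduce the privacy inequality \Eqn{e0848} to a \emph{pointwise} comparison of output densities, invoking \Lem{l1328} to pin down the distribution produced by $K$ and then the triangle inequality for the Euclidean norm to bound the ratio.

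First I would apply \Lem{l1328}: since the noise $x'$ is generated with radial part $r$ drawn from $\Gam{n}{1/\epsilon}$ and angular part $p$ drawn from $\Unif{n}$, the vector $x'$ is distributed exactly according to the $n$-dimensional Laplacian $\Lap{n}{\epsilon}$ of \Def{d1141}. Hence the output $x{+}x'$ of $K$ on input $x$ has density $z \mapsto \Lap{n}{\epsilon}(z{-}x) = c_n^\epsilon\, e^{-\epsilon|\!|z-x|\!|}$, so that $K(x)(Z) = \int_Z c_n^\epsilon\, e^{-\epsilon|\!|z-x|\!|}\,dz$ for every measurable $Z \subseteq \Real^n$.

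Next I would compare the two densities at a fixed output point $z$. Because the normalising constant $c_n^\epsilon$ depends only on $n$ and $\epsilon$ (not on the input), it cancels in the ratio, giving $\Lap{n}{\epsilon}(z{-}x)/\Lap{n}{\epsilon}(z{-}x') = e^{\epsilon(|\!|z-x'|\!| - |\!|z-x|\!|)}$. The (reverse) triangle inequality $|\!|z-x'|\!| - |\!|z-x|\!| \le |\!|x-x'|\!|$ then yields the pointwise bound $\Lap{n}{\epsilon}(z{-}x) \le e^{\epsilon|\!|x-x'|\!|}\,\Lap{n}{\epsilon}(z{-}x')$. Integrating this over $Z$ and pulling the input-independent factor $e^{\epsilon|\!|x-x'|\!|}$ out of the integral gives $K(x)(Z) \le e^{\epsilon|\!|x-x'|\!|}\,K(x')(Z)$, which is precisely \Eqn{e0848} for the Euclidean metric.

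The only substantive step is the first one --- identifying the output density through \Lem{l1328}; after that the argument is the textbook Laplace-mechanism calculation, and the single inequality used is the triangle inequality, valid because $|\!|\cdot|\!|$ is a metric. I therefore anticipate no real obstacle. The one point requiring care is verifying that $c_n^\epsilon$ genuinely cancels, which is immediate from \Def{d1141}. Notably the same computation goes through verbatim for any metric in place of $|\!|\cdot|\!|$, which is the observation later leveraged by \Cor{cor1}.
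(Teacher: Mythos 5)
Your proposal is correct and matches the paper's own proof essentially step for step: both identify the output density as $\Lap{n}{\epsilon}$ centred at the input via \Lem{l1328}, cancel the normalising constant $c_n^\epsilon$ in the pointwise ratio, and apply the triangle inequality before integrating over the output set. Your write-up is if anything slightly more explicit than the paper's sketch about the final integration step, but the argument is the same.
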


\begin{proof}
(Sketch) Let $z, y \in \Real^n$. We need to show that for any (measurable) set $A\subseteq \Real^n$ that:
\begin{equation}\label{e1447}
K(z)(A)/ K(y)(A) \Wide{\leq} e^{\epsilon  |\!| z{-}y \!|\!|}~.
\end{equation}
However \Eqn{e1447} follows provided that the probability densities of respectively $K(z)$ and $K(y)$ satisfy it. By \Lem{l1328} the  probability density of $K(z)$, as a function of $x$ is distributed as $\Lap{n}{\epsilon}(z{-}x)$; and similarly for the probability density of $K(y)$. Hence we reason:
\begin{Reason}
\Step{}
{\Lap{n}{\epsilon}(z{-}x)/ \Lap{n}{\epsilon}(y{-}x)}
\StepR{$=$}{\Def{d1141}}
{c_n^\epsilon{\times}e^{-\epsilon  |\!| z{-}x \!|\!|}/{c_n^\epsilon{\times}e^{-\epsilon  |\!| y{-}x \!|\!|}}}
\StepR{$=$}{Arithmetic}
{e^{-\epsilon  |\!| z{-}x \!|\!|}\times e^{\epsilon  |\!| y{-}x \!|\!|}}
\StepR{$\leq$}{Triangle inequality; $s \mapsto e^s$ is monotone}
{e^{\epsilon  |\!| z{-}y \!|\!|}~,}
\end{Reason}
as required.
\end{proof}

\Thm{thm:laplace} reduces the problem of adding Laplace noise to vectors in $\Real^n$ to selecting a real value according to the Gamma distribution and an independent uniform selection of a unit vector.
Several methods have been proposed for generating random variables according to the Gamma distribution~\citep{kroese2013handbook} as well as for the uniform selection of vectors on the unit n-sphere~\citep{marsaglia1972choosing}.  The uniform selection of a unit vector  has also been described in \citep{marsaglia1972choosing}; it avoids the transformation to spherical coordinates by selecting $n$ random variables from the standard normal distribution to produce vector $v \in \Real^n$, and then normalising to output $\frac{v}{|v|}$. 


\subsection{Earth Mover's Privacy in ${\mathbb B}\mathbb{R}^n$}

Using the $n$-dimensional Laplacian, we can now implement an algorithm for $\epsilon N E_{\Euclidean{\cdot}}$-privacy. Algorithm~\ref{a1354} takes a bag of $n$-dimensional vectors as input and applies the $n$-dimensional Laplacian mechanism described in \Thm{thm:laplace} to each vector in the bag, producing a noisy bag of $n$-dimensional vectors as output. \Cor{c1135} summarises the privacy guarantee.


\begin{algorithm}
\caption{Earth Mover's Privacy Mechanism \label{a1354}}
\begin{algorithmic}[1]
\Require  vector $v$, dimension $n$, epsilon $\epsilon$ 
\Procedure{GenerateNoisyVector}{$v, n, \epsilon$} 
\State $r \gets Gamma(n, \frac{1}{\epsilon}$)
\State $u \gets \mathcal{U}(n)$
\State \textbf{return} $v + ru$
\EndProcedure
\end{algorithmic}

\begin{algorithmic}[1]
\Require bag $X$, dimension $n$, epsilon $\epsilon$
\Procedure{GeneratePrivateBag}{$X, n, \epsilon$}
\State $Z \gets ()$
\ForAll{ $x \in X$}
\State  $z \gets \Call{GenerateNoisyVector}{x, n, \epsilon}$ 
\State \text{add} $z$ \text{to} $Z$
\EndFor
\State \text{return} $Z$
\EndProcedure
\end{algorithmic}
\end{algorithm}

\begin{corollary}\label{c1135}
Algorithm~\ref{a1354} satisfies $\epsilon N E_{\Euclidean{\cdot}}$-privacy, relative to any two bags in ${\mathbb B}\Real^n$ of size $N$.
\begin{proof}
Follows from \Thm{emprivacy} and \Thm{thm:laplace}. 
\end{proof}

\end{corollary}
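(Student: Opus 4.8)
The plan is to observe that \Cor{c1135} is an immediate composition of the two main results already in hand, so the only real work is matching the pseudocode of Algorithm~\ref{a1354} to the abstract mechanisms of \Thm{thm:laplace} and \Thm{emprivacy}, specialising to $\mathcal{X} \Defs \Real^n$ with underlying pseudometric $d_{\cal X} \Defs \Euclidean{\cdot}$, the Euclidean metric.

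First I would identify the inner routine \textsc{GenerateNoisyVector} with the mechanism $K$ of \Thm{thm:laplace}. That routine sends an input vector $v$ to $v + ru$, where $r$ is drawn from $\Gam{n}{1/\epsilon}$ and $u$ is drawn uniformly on the unit hypersphere via $\Unif{n}$; setting $x' \Defs ru$ recovers exactly the $(r,p)$-decomposition of the added noise in the statement of \Thm{thm:laplace}. Thus \Thm{thm:laplace} applies verbatim and tells us that \textsc{GenerateNoisyVector} satisfies \Eqn{e0848}, i.e.\ it is $\epsilon \Euclidean{\cdot}$-private on $\Real^n$.

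Next I would identify the outer routine \textsc{GeneratePrivateBag} with the lifted mechanism $K^\star$ of \Thm{emprivacy}. Its loop applies \textsc{GenerateNoisyVector} independently to each vector $x$ of the input bag $X$ and collects the outputs, which is precisely the construction ``apply $K$ independently to every element'' that defines $K^\star$. Since \Cor{c1135} quantifies only over pairs of bags of a common fixed size $N$, the object to which the theorem applies is the restriction $K^\star\restrict$. Invoking \Thm{emprivacy} with the pseudometric $\Euclidean{\cdot}$ then yields that $K^\star\restrict$ satisfies $\epsilon N E_{\Euclidean{\cdot}}$-privacy, which is the claim.

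I expect no genuine obstacle: the result is essentially glue between the per-vector guarantee and the bag-lifting theorem. The only point requiring a sentence of care is confirming that the pseudocode's samplers realise exactly the assumed distributions --- that the Gamma call produces $\Gam{n}{1/\epsilon}$ and that $\mathcal{U}(n)$ produces $\Unif{n}$, the uniform law on the unit $n$-sphere (e.g.\ via the normalised-Gaussian method noted after \Thm{thm:laplace}) --- so that the hypotheses of \Thm{thm:laplace} hold on the nose before \Thm{emprivacy} is applied.
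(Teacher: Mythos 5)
Your proposal is correct and matches the paper's own argument exactly: the paper's proof of \Cor{c1135} is precisely the observation that the result follows by composing \Thm{thm:laplace} (for the per-vector guarantee realised by \textsc{GenerateNoisyVector}) with \Thm{emprivacy} (for the bag-lifting realised by \textsc{GeneratePrivateBag}). Your additional care in checking that the pseudocode's samplers realise $\Gam{n}{1/\epsilon}$ and $\Unif{n}$ is a sensible elaboration of the same route, not a different one.
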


\subsection{Utility Bounds}

We prove a lower bound on the utility for this algorithm, which applies for high dimensional data representations. Given an output element $x$, we define $Z$ to be the set of outputs within distance $\Delta > 0$ from $x$. Recall that the distance function is a measure of utility, therefore $Z = \{ z ~|~ E_{|\!|\cdot |\!|}(x, z) \le \Delta\}$ represents the set of vectors within utility $\Delta$ of $x$. Then we have the following:

\begin{theorem}\label{l1534}
Given an input bag $b$ consisting of $N$ $n$-dimensional vectors, the mechanism defined by Algorithm \ref{a1354} outputs an element from $Z = \{ z ~|~ E_{|\!|\cdot |\!|}(b, z) \le \Delta\}$ with probability at least 
\[
1- e^{-\epsilon N \Delta}e_{n{-}1}(\epsilon N\Delta)~,
\]
whenever $\epsilon N\Delta \leq n/e$. (Recall that $e_k(\alpha) = \sum_{0\leq i \leq k} \frac{\alpha^i}{i!}$, the sum of the first $k+1$ terms in the series for $e^\alpha$~.)

%
\begin{proof} (Sketch)
Let $\underline{b} \in (\Real^n)^N$ be a (fixed) vector representation of the bag $b$. For $v\in (\Real^n)^N$, let  $v^\circ \in {\mathbb B}\Real^n$ be the bag comprising the $N$ components if $v$.  Observe that $NE_{|\!|\cdot |\!|}(b, v^\circ) \leq  M_{|\!|\cdot |\!|}(\underline{b}, v)$, and so  
\begin{equation}\label{e1208}
Z_M = \{ v~ |~ M_{|\!|\cdot |\!|}(\underline{b}, v) \le N\Delta\} \Wide{\subseteq} \{ v ~|~ E_{|\!|\cdot |\!|}(b, v^\circ) \le \Delta\} = Z_E~.
\end{equation}
 Thus the probability of outputting an element of $Z$ is the same as the probability of outputting $Z_E$, and by \Eqn{e1208} that is at least the probability of outputting an element from $Z_M$ by applying a standard n-dimensional Laplace mechanism  to each of the components of $\underline{b}$.  We can now compute:
\begin{Reason}
\Step{}
{\textit{Probability of outputting an element in} ~ Z_E}
\WideStepR{$\geq$}{\Eqn{e1208}}
{\int \dots \int_{v \in Z_M}\prod_{1\leq i\leq N} \Lap{n}{\epsilon}(\underline{b}_i {-} v_i) dv_1\dots dv_N} 
\WideStepR{$=$}{\Lem{l1328}}
{\int\dots\int_{v \in Z_M}\prod_{1\leq i\leq N} c_n^\epsilon  e^{-\epsilon |\!| \underline{b}_i {-} v_i|\!|} dv_1\dots dv_N~.} 
\end{Reason}
The result follows by completing the multiple integrals and applying some approximations, whilst observing that the variables in the integration are $n$-dimensional vector valued. The details appear in \App{A0476}.
\end{proof}

We note that our application word embeddings are typically mapped to vectors in $\Real^{300}$, thus we would use $n\sim 300$ in \Thm{l1534}.

\end{theorem}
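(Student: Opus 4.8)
The plan is to transfer the problem from the Earth Mover's ball $Z_E = \{v \mid E_{|\!|\cdot|\!|}(b, v^\circ)\le \Delta\}$ to the Manhattan ball $Z_M = \{v \mid M_{|\!|\cdot|\!|}(\underline{b}, v)\le N\Delta\}$, where the independence built into Algorithm~\ref{a1354} can be exploited directly. First I would fix a vector representation $\underline{b}\in(\Real^n)^N$ of the input bag $b$ and write $v^\circ$ for the bag underlying a vector $v$. Since the identity matching between the components of $\underline{b}$ and those of $v$ is one admissible flow in \Def{d0938} (each word carrying mass $1/N$), the optimal Earth Mover's cost satisfies $N\,E_{|\!|\cdot|\!|}(b, v^\circ)\le M_{|\!|\cdot|\!|}(\underline{b}, v)$, exactly as recorded at \Eqn{e1208}. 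This gives the inclusion $Z_M\subseteq Z_E$, so it suffices to lower-bound the probability that Algorithm~\ref{a1354} produces an output in $Z_M$.

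For that probability I would use the independent structure of the mechanism. Algorithm~\ref{a1354} adds independent $n$-dimensional Laplace noise to each of the $N$ components of $\underline{b}$, so writing $r_i \Defs |\!|\underline{b}_i - v_i|\!|$ for the magnitude of the noise on component $i$, the event $\{v\in Z_M\}$ is precisely $\{\sum_{1\le i\le N} r_i \le N\Delta\}$. By \Lem{l1328} each $r_i$ is distributed as the Gamma law $\Gam{n}{1/\epsilon}$ of \Def{d1732}, and the $r_i$ are mutually independent. Changing to spherical coordinates in each $\Real^n$ factor and integrating out the angular variables --- the Jacobian supplying the $r_i^{n-1}$ weight exactly as in the proof of \Lem{l1328} --- collapses the $Nn$-dimensional integral over $Z_M$ to the $N$-fold simplex integral $\int_{\sum_i r_i\le N\Delta}\prod_i \Gam{n}{1/\epsilon}(r_i)\,dr_1\cdots dr_N$.

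The core step is the evaluation of this simplex integral. I would appeal to the reproductive property of the Gamma family at fixed scale: a sum of $N$ independent $\Gam{n}{1/\epsilon}$ variables is again Gamma, with shape $Nn$ and scale $1/\epsilon$. The required probability is thus the cumulative distribution function of this sum at $N\Delta$, which for integer shape admits the closed Erlang form $1 - e^{-\epsilon N\Delta}\sum_{0\le i}(\epsilon N\Delta)^i/i!$ truncated at the shape. To reach the stated expression I would then estimate this partial exponential sum: in the regime $\epsilon N\Delta\le n/e$ the successive terms have ratio $\epsilon N\Delta/i < 1/e$ once $i\ge n$, so the tail beyond index $n-1$ is geometrically small and the sum is controlled by its leading segment $e_{n-1}(\epsilon N\Delta)$, producing the bound $1 - e^{-\epsilon N\Delta}e_{n-1}(\epsilon N\Delta)$.

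The step I expect to be the main obstacle is precisely this last estimate. The clean additivity computation yields a partial sum running to index $Nn-1$, whereas the claimed bound retains only the first $n$ terms; reconciling the two requires the hypothesis $\epsilon N\Delta\le n/e$ to guarantee that the omitted tail of the series is negligible and that the resulting approximation is oriented correctly. Carrying out the multidimensional simplex integration cleanly --- in particular tracking the normalising constant $c_n^\epsilon$ against the surface area of the unit hypersphere $B^n$ so that the radial densities genuinely normalise to $\Gam{n}{1/\epsilon}$ --- and bounding this tail are the delicate parts, while the reduction to $Z_M$ and the collapse to a single Gamma variable are routine. These calculations would be deferred to \App{A0476}.
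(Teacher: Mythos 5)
Your reduction to the Manhattan ball $Z_M$ and the collapse to radial variables match the paper's proof exactly; where you genuinely depart is in the core computation. The paper (\App{A0476}) never invokes Gamma additivity: it evaluates $I_N = \int_{\sum_j \|v_j\|\le R}\prod_i \Lap{n}{\epsilon}(v_i)\,dv_1\cdots dv_N$ by peeling off one vector at a time, using the single-vector ball probability $1 - e^{-\epsilon R}e_{n-1}(\epsilon R)$ of \Lem{l1600}, bounding the inner integral via monotonicity of $e_{n-1}$ and bounding the outer region by the sphere volume $V_n(R)^{N-1}$, which yields the recursion $I_N \ge I_{N-1} - e^{-\epsilon R}(c_n^\epsilon)^{N-1}e_{n-1}(\epsilon R)V_n(R)^{N-1}$ and hence $I_N \ge 1 - e^{-\epsilon R}e_{n-1}(\epsilon R)\,\frac{(c_n^\epsilon V_n(R))^N - 1}{c_n^\epsilon V_n(R) - 1}$. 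Your route --- the sum of $N$ i.i.d.\ $\Gam{n}{1/\epsilon}$ radii is $\Gam{Nn}{1/\epsilon}$, so the $Z_M$ probability is the Erlang CDF $1 - e^{-\epsilon N\Delta}e_{Nn-1}(\epsilon N\Delta)$ --- is cleaner and gives the \emph{exact} value where the paper only gets a lower bound.

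However, the obstacle you flag at the end is not merely delicate; it is insurmountable in the direction the theorem needs, and your own exact formula shows why. Since $e_{Nn-1}(\alpha) > e_{n-1}(\alpha)$ for every $\alpha > 0$ and $N \ge 2$ (the omitted terms are positive), the exact probability $1 - e^{-\alpha}e_{Nn-1}(\alpha)$ of landing in $Z_M$ lies \emph{strictly below} the claimed bound $1 - e^{-\alpha}e_{n-1}(\alpha)$. No tail estimate can reverse this: under $\alpha = \epsilon N\Delta \le n/e$, your geometric-ratio argument combined with Stirling ($\alpha^n/n! \le 1/\sqrt{2\pi n}$) shows only that the deficit $e^{-\alpha}\sum_{k=n}^{Nn-1}\alpha^k/k!$ is at most about $\frac{e}{e-1}\cdot\frac{e^{-\alpha}}{\sqrt{2\pi n}}$; that is, the stated bound holds up to a small additive error, not exactly. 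So ``producing the bound'' in your third paragraph is an overstatement --- your argument, carried out rigorously, would actually prove that the bound fails (by a provably small amount) at the level of $Z_M$, and could only be rescued by exploiting the slack in the inclusion $Z_M \subseteq Z_E$, which you do not attempt.

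You should know that the paper's own proof has precisely the same defect, just better hidden: its appendix bound carries the factor $\frac{(c_n^\epsilon V_n(N\Delta))^N - 1}{c_n^\epsilon V_n(N\Delta) - 1} = 1 + x + \cdots + x^{N-1} > 1$, with $x = (\epsilon N\Delta)^n/n!$, and the final step discards it by arguing that $x \le 1/\sqrt{2\pi n}$ makes the factor ``$\approx 1$'' for $n \ge 30$ --- an approximation, not an inequality, and one of the same $O(1/\sqrt{n})$ order as your deficit. So your proposal reaches the same approximate conclusion as the paper by a sharper and more transparent route; its added value is that it makes the size and the \emph{sign} of the unavoidable error explicit, which the paper's presentation obscures.
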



\section{Text Document Privacy}\label{s1210}

In this section we bring everything together, and present a privacy mechanism for text documents; we explore how it contributes to the author obfuscation task described above.  Algorithm~\ref{a1453} describes the complete procedure for taking a document as a  bag-of-words, and outputting a ``noisy'' bag-of-words. Depending on the setting of parameter $\epsilon$, the output bag will be likely to be classified to be on a  similar topic as the input. 

\begin{algorithm}
\caption{Document privacy mechanism \label{a1453}}
\begin{algorithmic}[1]
\Require  Bag-of-words $b$, dimension $n$, epsilon $\epsilon$, Word embedding $\myVec: {\cal S} \Fun \Real^n$ 
\Procedure{GenerateNoisyBagOfWords}{$b, n, \epsilon, \myVec$} 
\State  $X \gets  \myVec^\star(b) $
\State  $Z \gets   \Call{GeneratePrivateBag}{X, n, \epsilon}$
\State \textbf{return} $(\myVec^{-1})^\star(Z)$
\EndProcedure
\end{algorithmic}

\medskip

Note that $\myVec^\star : {\mathbb B}{\cal S} \Fun  {\mathbb B}\Real^n$ applies $\myVec$ to each word in a bag $b$,
and $(\myVec^{-1})^\star  : {\mathbb B}\Real^n \Fun  {\mathbb B}{\cal S}$ reverses this procedure as a post-processing step; this involves determining the word $w$ that minimises the Euclidean distance $|\!|z- \myVec(w)|\!|$ for each $z$ in $Z$.
\end{algorithm}

Algorithm~\ref{a1453} uses a function $\myVec$ to turn the input document into a bag of word embeddings; next Algorithm~\ref{a1354} produces a noisy bag of word embeddings, and, in a final step the inverse $\myVec^{-1}$ is used to reconstruct an actual bag-of-words as output. In our implementation of Algorithm~\ref{a1453}, described below, we compute $\myVec^{-1}(x)$ to be the word $w$ that minimises the Euclidean distance $|\!|z- \myVec(w)|\!|$. The next result summarises the privacy guarantee for Algorithm~\ref{a1453}.

\begin{theorem}\label{t1355}
Algorithm \ref{a1453} satisfies $\epsilon NE_{d_{\cal S}}$-privacy, where $d_{\cal S}= \myDist_{\myVec}$.  That is to say: given input documents (bags) $b, b'$ both of size $N$, and $c$ a possible output bag, define the following quantities as follows:
$
k \Defs E_{|\!|\cdot |\!|}(\myVec^\star(b), \myVec^\star(b'))~, 
$
 $pr(b, c)$ and $pr(b', c)$ are the respective probabilities that $c$  is output given the input was $b$ or $b'$. Then:
\[
pr(b, c) \Wide{\leq} e^{\epsilon N k} \times pr(b', c)~.
\]
\begin{proof}
The result follows by appeal to \Thm{thm:laplace} for privacy on the word embeddings; the step to apply $\myVec^{-1}$ to each vector is a post-processing step which by \Lem{l1112} preserves the privacy guarantee. 
\end{proof}
\end{theorem}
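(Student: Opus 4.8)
The plan is to read Algorithm~\ref{a1453} as a composite of three stages---a deterministic embedding $\myVec^\star$, the randomised bag mechanism $G$ of Algorithm~\ref{a1354} acting on bags in ${\mathbb B}\Real^n$, and the deterministic decoding $(\myVec^{-1})^\star$---and to discharge each stage with a result already established. First I would record the purely notational (but essential) observation that, because $d_{\cal S}=\myDist_{\myVec}$ is defined pointwise by $d_{\cal S}(w_1,w_2)=|\!|\myVec(w_1)-\myVec(w_2)|\!|$ (\Def{d0925} with $\myDist$ Euclidean), the two Earth Mover's problems coincide: for any word bags $b,b'$ of size $N$,
\[
E_{d_{\cal S}}(b,b') \Wide{=} E_{|\!|\cdot |\!|}(\myVec^\star(b),\myVec^\star(b')) \Wide{=} k~,
\]
since the flow constraints and the per-pair costs $d_{\cal S}(x_i,y_j)=|\!|\myVec(x_i)-\myVec(y_j)|\!|$ are literally the same optimisation. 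This identifies the target multiplier $e^{\epsilon N E_{d_{\cal S}}(b,b')}$ with the claimed $e^{\epsilon N k}$, so the embedding step contributes nothing beyond relabelling the ground space by its image under $\myVec$.

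Next I would invoke \Cor{c1135}, which (via \Thm{emprivacy} and \Thm{thm:laplace}) states that $G$ satisfies $\epsilon N E_{|\!|\cdot |\!|}$-privacy on bags of $N$ vectors in ${\mathbb B}\Real^n$. Applied to the embedded inputs $X \Defs \myVec^\star(b)$ and $X' \Defs \myVec^\star(b')$, this gives, for every measurable $Z\subseteq{\mathbb B}\Real^n$,
\[
G(X)(Z) \Wide{\leq} e^{\epsilon N E_{|\!|\cdot |\!|}(X,X')}\,G(X')(Z) \Wide{=} e^{\epsilon N k}\,G(X')(Z)~,
\]
using the identity of the previous paragraph. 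The remaining stage is the decoding $(\myVec^{-1})^\star$, which sends each noisy vector $z$ to the word minimising $|\!|z-\myVec(w)|\!|$; this is deterministic and depends only on the mechanism's output, not on the original input. I would therefore treat it as a post-processing step and appeal to \Lem{l1112}, concluding that composing $G$ with $(\myVec^{-1})^\star$ still satisfies the same $\epsilon N E_{d_{\cal S}}$-bound; unfolding the definition of composition \Eqn{d1143} then yields exactly $pr(b,c)\le e^{\epsilon N k}\,pr(b',c)$ for any output bag $c$.

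The one point needing a little care---and where I expect the only real friction---is that \Lem{l1112} is stated for post-processors $K'\In{\mathbb B}\mathcal{X}\Fun\Dist({\mathbb B}\mathcal{X})$ on a single bag space, whereas here the decoder carries ${\mathbb B}\Real^n$ into ${\mathbb B}{\cal S}$. I would note that the proof of \Lem{l1112} uses nothing about the codomain beyond summing the defining inequality \Eqn{e1116} against the nonnegative post-processing kernel, so it applies verbatim to a kernel between distinct spaces (a deterministic $(\myVec^{-1})^\star$ being the point-mass special case). Everything else is bookkeeping: since both $\myVec^\star$ and $(\myVec^{-1})^\star$ are size-preserving, all inputs and outputs along the chain remain bags of the fixed size $N$, so the ``restriction to bags of size $N$'' hypothesis of \Thm{emprivacy} is met throughout.
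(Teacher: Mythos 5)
Your proof is correct and follows essentially the same route as the paper's own (very terse) proof: privacy of the embedded bag mechanism via \Thm{thm:laplace} lifted through \Thm{emprivacy}/\Cor{c1135}, followed by \Lem{l1112} to absorb the decoding step as post-processing. The details you add---the identification $E_{d_{\cal S}}(b,b') = E_{|\!|\cdot|\!|}(\myVec^\star(b),\myVec^\star(b'))$ and the observation that \Lem{l1112} extends verbatim to post-processing kernels between distinct bag spaces---are precisely the points the paper leaves implicit.
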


Although \Thm{t1355} utilises ideas from differential privacy, an interesting question to ask is how it contributes to the PAN@Clef author obfuscation task, which recall asked for mechanisms that preserve content but mask features that distinguish authorship.  Algorithm~\ref{a1453} does indeed attempt to preserve content (to the extent that the topic can still be determined) but it does not directly ``remove stylistic features''. So has it, in fact, disguised the author's characteristic style?  To answer that question, we review \Thm{t1355} and interpret what it tells us in relation to author obfuscation. The theorem implies that it is indeed possible to make the (probabilistic) output from two distinct documents $b, b'$ almost indistinguishable by choosing $\epsilon$ to be extremely small in comparison with $N{\times}E_{|\!|\cdot |\!|}(\myVec^{\star}(b), \myVec^{\star}(b'))$. However, if $E_{|\!|\cdot |\!|}(\myVec^{\star}(b), \myVec^{\star}(b'))$ is very large -- meaning that $b$ and $b'$ are on entirely different topics, then $\epsilon$ would need to be so tiny that the noisy output document would be highly unlikely to be on a topic remotely close to either $b$ or $b'$ (recall \Lem{l1534}).  

This observation is actually highlighting the fact that, in some circumstances, the topic itself is actually a feature that characterises author identity.  (First-hand accounts of breaking the world record
for highest and longest free fall jump would immediately narrow the field down to the title holder.)  This means that \emph{any} obfuscating mechanism would, as for Algorithm~\ref{a1453}, only be able to obfuscate documents so as to disguise the author's identity if there are several authors who write on similar topics.  And it is in that spirit, that we have made the first step towards a satisfactory obfuscating mechanism:  provided that documents are similar in topic (i.e.\ are close when their embeddings are measured by $E_{|\!|\cdot |\!|}$) they can be obfuscated so that it is unlikely that the content is disturbed, but that the contributing authors cannot be determined easily.

We can see the importance of the ``indistinguishability'' property wrt.\ the PAN obfuscation task. In stylometry analysis the representation of words for eg.\ author classification is completely different to the word embeddings which have used for topic classification. State-of-the-art author attribution algorithms represent words as ``character n-grams'' \cite{koppel2011authorship} which have been found to capture stylistic clues such as systematic spelling errors. A \emph{character 3-gram} for example represents  a given word as the complete list of substrings of length 3. For example character 3-gram representations of ``color'' and ``colour'' are:

\begin{itemize}
    \item[$\cdot$] ``color'' $\mapsto$ $|\![$ ``col'', ``olo'', ``lor''  $]\!|$
        \item[$\cdot$] ``colour'' $\mapsto$ $|\![$ ``col'', ``olo'', ``lou'', ``our''  $]\!|$
\end{itemize}

For author identification, any output from Algorithm~\ref{a1453} would then need to be further transformed to a bag of character n-grams, as a post processing step; by \Lem{l1112} this additional transformation preserves the privacy properties of Algorithm~\ref{a1453}.
We explore this experimentally in the next section.

\section{Experimental Results}\label{s1545}





\paragraph{Document Set}
The PAN@Clef tasks and other similar work have used a variety of types of text for author identification and author obfuscation.  Our desiderata are that we have multiple authors writing on one topic (so as to minimise the ability of an author identification system to use topic-related cues) and to have more than one topic (so that we can evaluate utility in terms of accuracy of topic classification).  Further, we would like to use data from a domain where there are potentially large quantities of text available, and where it is already annotated with author and topic.

Given these considerations, we chose ``fan fiction'' as our domain.  Wikipedia defines \emph{fan fiction} as follows: ``Fan fiction \ldots\ is fiction about characters or settings from an original work of fiction, created by fans of that work rather than by its creator.''  This is also the domain that was used in the PAN@Clef 2018 author attribution challenge,\footnote{\url{https://pan.webis.de/clef18/pan18-web/author-identification.html}} although for this work we scraped our own dataset.  We chose one of the largest fan fiction sites and the two largest ``fandoms'' there;\footnote{\url{https://www.fanfiction.net/book/}, with the two largest fandoms being Harry Potter (797,000 stories) and Twilight (220,000 stories).} these fandoms are our topics.  We scraped the stories from these fandoms, the largest proportion of which are for use in training our topic classification model.  We held out two subsets of size 20 and 50, evenly split between fandoms/topics, for the evaluation of our privacy mechanism.\footnote{Our Algorithm~\ref{a1453} is computationally quite expensive, because each word $w = \myVec^{-1}(x)$ requires the calculation of Euclidean distance with respect to the whole vocabulary.  We thus use relatively small evaluation sets, as we apply the algorithm to them for multiple values of $\epsilon$.}  We follow the evaluation framework of \cite{koppel2011authorship}: for each author we construct an known-author \textsc{text} and an unknown-author \textsc{snippet} that we have to match to an author on the basis of the known-author texts.  (See Appendix~\ref{app:data} for more detail.)

\paragraph{Word Embeddings}
There are sets of word embeddings trained on large datasets that have been made publicly available.  Most of these, however, are already normalised, which makes them unsuitable for our method.  We therefore use the Google News word2vec embeddings as the only large-scale unnormalised embeddings available.  (See Appendix~\ref{app:data} for more detail.)

\paragraph{Inference Mechanisms}

We have two sorts of machine learning inference mechanisms: our adversary mechanism for author identification, and our utility-related mechanism for topic classification.  For each of these, we can define inference mechanisms both within the same representational space or in a different representational space.  As we noted above, in practice both author identification adversary and topic classification will use different representations, but examining same-representation inference mechanisms can give an insight into what is happening within that space.


\paragraph{Different-representation author identification}
For this we use the algorithm by \cite{koppel2011authorship}.
This algorithm is widely used: it underpins two of the winners of PAN shared tasks \citep{seidman:2013:PAN,khonji-iraqi:2014:PAN}; is a common benchmark or starting point for other methods
\citep{sapkota-etal:2015:NAACL,ruder-etal:2016:arxiv,halvani-etal:2017:arxiv,potha-stamatatos:2017:CLEF}; and is a standard inference attacker for the PAN shared task on
authorship obfuscation.\footnote{\url{http://pan.webis.de/clef17/pan17-web/author-obfuscation.html}}  It works by representing each text as a vector of space-separated character n-gram counts, and comparing repeatedly sampled subvectors of known-author texts and snippets using cosine similarity.  We use as a starting point the code from a reproducibility study \cite{potthast-etal:2016:ECIR}, but have modified it to improve efficiency.  (See Appendix~\ref{app:impl} for more details.)

\paragraph{Different-representation topic classification}
Here we choose fastText \cite{bojanowski-etal:2016:arxiv,joulin-etal:2016:arxiv}, a high-performing supervised machine learning classification system.  It also works with word embeddings; these differ from word2vec in that they are derived from embeddings over character n-grams, learnt using the same skipgram model as word2vec.  This means it is able to compute representations for words that do not appear in the training data, which is helpful when training with relatively small amounts of data; also useful when training with small amounts of data is the ability to start from pretrained embeddings trained on out-of-domain data that are then adapted to the in-domain (here, fan fiction) data.  After training, the accuracy on a validation set we construct from the data is 93.7\% (see Appendix~\ref{app:impl} for details).

\paragraph{Same-representation author identification}
In the space of our word2vec embeddings, we can define an inference mechanism that for an unknown-author snippet chooses the closest known-author text by Euclidean distance.

\paragraph{Same-representation topic classification}
Similarly, we can define an inference mechanism that considers the topic classes of neighbours and predicts a class for the snippet based on that.  This is essentially the standard  $k$ ``Nearest Neighbours'' technique ($k$-NN) \cite{hastie-etal:2009}, a non-parametric method that assigns the majority class of the $k$ nearest neighbours.  1-NN corresponds to classification based on a Voronoi tesselation of the space, has low bias and high variance, and asymptotically has an error rate that is never more than twice the Bayes rate; higher values of $k$ have a smoothing effect.  Because of the nature of word embeddings, we would not expect this classification to be as accurate as the fastText classification above: in high-dimensional Euclidean space (as here), almost all points are approximately equidistant.  Nevertheless, it can give an idea about how a snippet with varying levels of noise added is being shifted in Euclidean space with respect to other texts in the same topic.  Here, we use $k=5$.  Same-representation author identification can then be viewed as 1-NN with author as class.

{\tiny 
\begin{table}
    \centering
    \begin{tabular}{|l||r|r||r|r|}
    \cline{2-5}
    \multicolumn{1}{c}{} & \multicolumn{4}{|c|}{20-author set}\\
    \hline
$\epsilon$ & SRauth & SRtopic & DRauth	& DRtopic\\
\hline
\hline
none &	12 &	16 &	15 &	18\\
\hline
30 &	8 &	18 &	16 &	18\\
25 &	8 &	18 &	14 &	17\\
20 &	5 &	11 &	11 &	16\\
15 &	2 &	11 &	12 &	17\\
10 &	0 &	15 &	11 &	19\\
\hline
    \end{tabular}
    \hspace{0.25cm}
    \begin{tabular}{|l||r|r||r|r|}
    \cline{2-5}
    \multicolumn{1}{c}{} & \multicolumn{4}{|c|}{50-author set}\\
    \hline
$\epsilon$ & SRauth & SRtopic & DRauth	& DRtopic\\
\hline
\hline
none &	19 &	36 &	27 &	43\\
\hline
30 &	19 &	37 &	29 &	43\\
25 &    17 &    34 &    24 &    41\\				
20 &	12 &	28 &	19 &	42\\
15 &    9 &     22 &    13 & 42\\				
10 &	1 &	24 &	10 &	43\\
\hline
    \end{tabular}
    
    \vspace{0.2cm}
    
  \caption{Number of correct predictions of author/topic in the 20-author set (left) and 50-author set (right), using 1-NN for same-representation author identification (SRauth), 5-NN for same-representation topic classification (SRtopic), the Koppel algorithm for different-representation author identification (DRauth) and fastText for different-representation topic classification (DRtopic). \label{tab:author-results}}
\end{table}
}

\paragraph{Results:} Table~\ref{tab:author-results} contains the results for both document sets, for the unmodified snippets (``none'') or with the privacy mechanism of Algorithm~\ref{a1453} applied with various levels of $\epsilon$: we give results for $\epsilon$ between 10 and 30, as at $\epsilon=40$ the text does not change, while at $\epsilon=1$ the text is unrecognisable.  For the 20-author set, a random guess baseline would give 1 correct author prediction, and 10 correct topic predictions; for the 50-author set, these values are 1 and 25 respectively.

Performance on the unmodified snippets using different-representation inference mechanisms is quite good: author identification gets 15/20 correct for the 20-author set and 27/50 for the 50-author set; and topic classification 18/20 and 43/50 (comparable to the validation set accuracy, although slightly lower, which is to be expected given that the texts are much shorter).  For various levels of $\epsilon$, with our different-representation inference mechanisms we see broadly the behaviour we expected: the performance of author identification drops, while topic classification holds roughly constant.  Author identification here does not drop to chance levels: we speculate that this is because (in spite of our choice of dataset for this purpose) there are still some topic clues that the algorithm of \cite{koppel2011authorship} takes advantage of: one author of Harry Potter fan fiction might prefer to write about a particular character (e.g. Severus Snape), and as these character names are not in our word2vec vocabulary, they are not replaced by the privacy mechanism.

In our same-representation author identification, though, we do find performance starting relatively high (although not as high as the different-representation algorithm) and then dropping to (worse than) chance, which is the level we would expect for our privacy mechanism.  The $k$-NN topic classification, however, shows some instability, which is probably an artefact of the problems it faces with high-dimensional Euclidean spaces.  (We show a sample of texts and nearest neighbours in Appendix~\ref{app:analysis})

\section{Related Work}


\paragraph{Author Obfuscation} The most similar work to ours is by Weggenmann and Kerschbaum \citep{weggenmann2018syntf} who also consider the author obfuscation problem but apply standard differential privacy using a Hamming distance of 1 between all documents. As with our approach, they consider the simplified utility requirement of topic preservation and use word embeddings to represent documents. Our approach differs in our use of the Earth Mover's metric to provide a strong utility measure for document similarity. 

An early work in this area by Kacmarcik et al. \cite{kacmarcik2006obfuscating} applies obfuscation by modifying the most important stylometric features of the text to reduce the effectiveness of author attribution. This approach was used in Anonymouth~\citep{mcdonald2012use}, a semi-automated tool that provides feedback to authors on which features to modify to effectively anonymise their texts. A similar approach was also followed by Karadhov et al.~\citep{karadzhov2017case} as part of the PAN@Clef 2017 task. 

Other approaches to author obfuscation, motivated by the PAN@Clef task, have focussed on the stronger utility requirement of semantic sensibility~\citep{bakhteev:2017,castro:2017,mansoorizadeh:2016b}. Privacy guarantees are therefore ad hoc and are designed to increase misclassification rates by the author attribution software used to test the mechanism. 

Most recently there has been interest in training neural networks models which can protect author identity whilst preserving the semantics of the original document~\citep{shetty2018a4nt, emmery2018style}. Other related deep learning methods aim to obscure other author attributes such as gender or age \cite{li-etal:2018:ACL,coavoux-etal:2018:EMNLP}.
While these methods produce strong empirical results, they provide no formal privacy guarantees.  Importantly, their goal also differs from the goal of our paper: they aim to obscure properties of authors in the \textit{training set} (with the intention of the author-obscured learned representations being made available), while we assume that an adversary may have access to raw training data to construct an inference mechanism with full knowledge of author properties, and in this context aim to hide the properties of some other text external to the training set.

\paragraph{Machine Learning and Differential Privacy}
Outside of author attribution, there is quite a body of work on introducing differential privacy to machine learning: \cite{dwork2014} gives an overview of a classical machine learning setting; more recent deep learning approaches include \cite{shokri-shmatikov:2015:CCS,abadi-etal:2016:CCS}.  However, these are generally applied in other domains such as image processing: text introduces additional complexity because of its discrete nature, in contrast to the continuous nature of neural networks.  A recent exception is \cite{mcmahan-etal:2018:ICLR}, which constructs a differentially private language model using a recurrent neural network; the goal here, as for instances above, is to hide properties of data items in the training set.

\paragraph{Generalised Differential Privacy} Also known as $d_\mathcal{X}$-privacy~\citep{chatzikokolakis2013broadening}, this definition was originally motivated by the problem of geo-location privacy~\citep{andres2013geo}. Despite its generality, $d_{\mathcal X}$-privacy has yet to find significant applications outside this domain; in particular, there have been no applications to text privacy.

\paragraph{Text Document Privacy} This typically refers to the sanitisation or redaction of documents either to protect the identity of individuals or to protect the confidentiality of their sensitive attributes. For example, a medical document may be modified to hide specifics in the medical history of a named patient. Similarly, a classified document may be redacted to protect the identity of an individual referred to in the text. 

Most approaches to sanitisation or redaction rely on first identifying sensitive terms in the text, and then modifying (or deleting) only these terms to produce a sanitised document. Abril et al.~\citep{abril2011declassification} proposed this two-step approach, focussing on identification of terms using NLP techniques.  Cumby and Ghani~\citep{cumby2011machine} proposed $k-confusability$, inspired by $k-anonymity$~\citep{sweeney2002}, to perturb sensitive terms in a document so that its (utility) class is confusable with at least $k$ other classes. Their approach requires a complete dataset of similar documents for computing (mis)classification probabilities. Anandan et al.~\citep{anandan2012t} proposed \emph{t-plausibility} which generalises sensitive terms such that any document could have been generated from at least $t$ other documents. S{\'a}nchez and Batet~\citep{sanchez2016c} proposed \emph{C-sanitisation}, a model for both detection and protection of sensitive terms ($C$) using information theoretic guarantees. In particular, a \emph{C-sanitised} document should contain no collection of terms which can be used to infer any of the sensitive terms. 

Finally, there has been some work on noise-addition techniques in this area. Rodriguez-Garcia et al.~\citep{rodriguez2015semantic} propose semantic noise, which perturbs sensitive terms in a document using a distance measure over the directed graph representing a predefined ontology.

Whilst these approaches have strong utility, our primary point of difference is our insistence on a differential privacy-based guarantee. This ensures that every output document could have been produced from any input document with some probability, giving the strongest possible notion of plausible-deniability.

\section{Conclusions}\label{s0939}
We have shown how to combine representations of text documents with generalised differential privacy in order to implement a privacy mechanism for text documents. Unlike most other techniques for 
privacy in text processing, ours provides a guarantee in the style of differential privacy. Moreover we have demonstrated experimentally the trade off between utility and privacy.

This represents an important step towards the implementation of privacy mechanisms that could produce readable summaries of documents with a privacy guarantee. One way to achieve this goal would be to reconstruct readable documents from the bag-of-words output that our mechanism currently provides. A range of promising techniques for reconstructing readable texts from bag-of-words have already produced some good experimental results \cite{wan-etal:2009:EACL,zhang-clark:2015:CL,hasler-etal:2017:INLG}. In future work we aim to explore how techniques such as these could  be applied as a final post processing step for our mechanism.

\bibliography{MathBib,NLPexper}


\newpage

\appendix

\setcounter{theorem}{0}
\renewcommand{\thetheorem}{\Alph{section}\arabic{theorem}}

\setcounter{definition}{0}
\renewcommand{\thedefinition}{\Alph{section}\arabic{definition}}

\setcounter{lemma}{0}
\renewcommand{\thelemma}{\Alph{section}\arabic{lemma}}

\setcounter{corollary}{0}
\renewcommand{\thecorollary}{\Alph{section}\arabic{corollary}}

\section{Appendix A}

Here we present proofs omitted from the main body of the paper.

\subsection{Proofs Omitted from \Sec{s1520}}\label{A0768}

To prove \Thm{emprivacy} we introduce the following results.

\begin{definition}{}
An $n \times n$ matrix whose elements are non-negative and has all rows and columns summing to 1 is called \emph{doubly stochastic}. A doubly stochastic matrix which contains only 1's and 0's is called a \emph{permutation matrix}. 
\end{definition}

\begin{theorem}{(Birkhoff-von Neumann)}\label{t1244}
The set of $n \times n$ doubly stochastic matrices forms a convex polytope whose vertices are the $n \times n$ permutation matrices.
\end{theorem}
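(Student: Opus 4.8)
The plan is to separate the statement into its two assertions and treat the convex-polytope claim as essentially bookkeeping, reserving the real work for identifying the vertices. Write $\Omega_n$ for the set of $n \times n$ doubly stochastic matrices. Since $\Omega_n$ is cut out of $\Real^{n\times n}$ by the linear inequalities $F_{ij}\geq 0$ together with the linear equalities $\sum_i F_{ij}=1$ and $\sum_j F_{ij}=1$, it is a convex polyhedron; it is bounded because each entry satisfies $0 \le F_{ij} \le \sum_k F_{ik} = 1$, so it is in fact a polytope. A polytope is the convex hull of its finitely many vertices, so it remains only to show that the vertex set is exactly the collection of permutation matrices.

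First I would check the easy inclusion, that permutation matrices are vertices. Each permutation matrix clearly lies in $\Omega_n$, and a $0/1$ matrix cannot be written as $\tfrac12(A+B)$ for distinct $A,B\in\Omega_n$: each entry already sits at an endpoint of $[0,1]$, and non-negativity together with the row/column sums forces $A_{ij}=B_{ij}=P_{ij}$ entry by entry. Hence each permutation matrix is an extreme point.

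The substantive direction is to show that every $F\in\Omega_n$ is a convex combination of permutation matrices, from which it follows that no non-permutation matrix can be a vertex. The key step is to produce a single permutation matrix supported inside the positive entries of $F$: I would form the bipartite graph joining row $i$ to column $j$ whenever $F_{ij}>0$ and verify Hall's condition. For any set $S$ of rows, the total mass $\sum_{i\in S}\sum_j F_{ij}=|S|$ is carried entirely by columns in the neighbourhood $N(S)$ (since $F_{ij}>0$ with $i \in S$ forces $j \in N(S)$), whose columns each sum to $1$, so $|S|\le\sum_{j\in N(S)}\sum_i F_{ij}=|N(S)|$. Hall's marriage theorem (equivalently König's theorem \cite{Konig:36}) then yields a permutation $\sigma$ with $F_{i\sigma(i)}>0$ for all $i$. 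This counting verification of Hall's condition is the one genuinely non-formal step, and is where I expect the main difficulty to lie.

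To finish I would iterate, inducting on the number of positive entries of $F$. Set $c\Defs\min_i F_{i\sigma(i)}>0$ and let $P_\sigma$ be the permutation matrix of $\sigma$. If $c=1$ then $F=P_\sigma$ and we are done; otherwise $F'\Defs (F-cP_\sigma)/(1-c)$ is again doubly stochastic, since the row and column sums rescale to $1$ and $c$ was chosen precisely so that the entries stay non-negative, and $F'$ has strictly more zero entries than $F$ (the minimising position becomes zero). By the inductive hypothesis $F'$ is a convex combination of permutation matrices, and hence so is $F=cP_\sigma+(1-c)F'$. Since $\Omega_n$ is thereby exhibited as the convex hull of the permutation matrices while each permutation matrix is extreme, the vertices of $\Omega_n$ are exactly the permutation matrices, as claimed.
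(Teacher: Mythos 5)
Your proof is correct, but there is essentially nothing in the paper to compare it against: the paper does not prove this theorem at all. It states Birkhoff--von Neumann as a classical result (with a citation to K\"onig) and uses it purely as a black box; the paper's own work in the appendix begins only afterwards, with the lemma that the Earth Mover's linear program over equal-sized bags attains its optimum at a permutation matrix. Your argument is the standard textbook proof, and each step checks out: the polytope claim follows since the defining polyhedron is bounded by $0 \le F_{ij} \le 1$; extremality of permutation matrices follows from the endpoint argument entry by entry (using that all entries of a doubly stochastic matrix lie in $[0,1]$, so an entry equal to $1$ forces both summands to equal $1$); and the substantive direction --- decomposing an arbitrary doubly stochastic matrix --- is handled correctly by verifying Hall's condition via the mass-counting inequality $|S| = \sum_{i\in S}\sum_{j} F_{ij} \le \sum_{j \in N(S)} \sum_i F_{ij} = |N(S)|$, extracting a permutation supported on the positive entries, peeling it off with weight $c = \min_i F_{i\sigma(i)}$, and inducting on the number of positive entries, which strictly decreases. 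The concluding step, that exhibiting the set as the convex hull of the permutation matrices (each of which is extreme) pins down the vertex set exactly, is also sound. What your proof buys is self-containedness; what the paper's citation buys is brevity, which is a reasonable choice given that the theorem is classical and the paper's actual contribution lies in how it is applied.
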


The Birkhoff-von Neumann theorem says that the set of doubly stochastic matrices is a closed, bounded convex set, and every doubly stochastic matrix can be written as a convex combination of the permutation matrices. We can now prove the following result.

\begin{lemma}
\label{thm:perm-matrix}
Let $D$ and $F$ be non-negative $n \times n$ matrices. Then the problem of finding an $F$ which minimises 
\begin{align*}
     \sum\limits_{i = 1}^{n} \sum\limits_{j = 1}^{n} D_{ij} F_{ij}
\end{align*}
subject to
\begin{align*}
       & \sum\limits_{i = 1}^{n} F_{ij} = 1  \quad \text{and } \quad \sum\limits_{j = 1}^{n} F_{ij} = 1   
\end{align*}
always has an $n \times n$ permutation matrix as an optimal solution.

\begin{proof}
We prove this by contradiction. Let $F^\star$ be an optimal $n \times n$ solution matrix. Since $F^\star$ is doubly stochastic we can apply Birkhoff-von Neumann. Firstly, we know that such a solution exists (since the set of solutions is closed and bounded). We now assume that $F^\star$ is not a permutation matrix, and also that no permutation matrix is optimal. Let $\{P^1, P^2, \ldots, P^k\}$ be the set of $n \times n$ permutation matrices. Then, by Birkhoff-von Neumann, we can write 
\begin{align}
    F^\star = \lambda_1 P^1 + \lambda_2 P^2 + \ldots + \lambda_k P^k
\label{eqn-bvn}
\end{align}
where $\lambda_i \ge 0$ and $\sum\limits_{i = 1}^k \lambda_i = 1$. Since $F^\star$ is optimal and none of the $P^i$ are optimal, by assumption we also know
\begin{align}\label{assump}
              \sum\limits_{i,j} P^m_{ij}D_{ij} &> \sum\limits_{i,j} F^\star_{ij}D_{ij}                                                   
\end{align}
for $0 < m \le k$. And thus we have:
\begin{Reason}
\Step{}
{\sum\limits_{i,j} F^\star_{ij} D_{ij}}
\StepR{$=$}{\Eqn{eqn-bvn}}
{\sum\limits_{i,j} (\lambda_1 P^1_{ij} + \ldots + \lambda_k P^k_{ij}) D_{ij}}
\StepR{$=$}{Factorising}
{\sum\limits_{i,j} \lambda_1 P^1_{ij} D_{ij} +  \ldots + \sum\limits_{i,j} \lambda_k P^k_{ij} D_{ij}}
\StepR{$>$}{\Eqn{assump}}
{\sum\limits_{i,j} \lambda_1 F^\star_{ij} D_{ij} + \ldots + \sum\limits_{i,j} \lambda_k F^\star_{ij} D_{ij}}
\StepR{$=$}{Arithmetic}
{\lambda_1  \sum\limits_{i,j} F^\star_{ij} D_{ij} + \ldots + \lambda_k \sum\limits_{i,j} F^\star_{ij} D_{ij}}
\StepR{$=$}{$\sum\limits_{i = 1}^k \lambda_i = 1$}                                      
{\sum\limits_{i,j} F^\star_{ij} D_{ij}} 
\end{Reason}
which is a contradiction. Thus, either $F^\star$ is a permutation matrix, or there must be a permutation matrix which is also optimal.
\end{proof}
\end{lemma}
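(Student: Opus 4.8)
The plan is to recognise the minimisation as a linear program whose feasible region is exactly the Birkhoff polytope, and then to exploit the elementary fact that a linear objective over such a polytope attains its optimum at a vertex. The constraints that every row sum and every column sum of $F$ equals $1$, together with $F_{ij}\ge 0$, say precisely that $F$ ranges over the doubly stochastic matrices. By \Thm{t1244} this set is a convex polytope whose vertices are exactly the permutation matrices, say $P^1,\dots,P^k$.

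First I would observe that the objective $F\mapsto \sum_{i,j}D_{ij}F_{ij}$ is a linear functional of the entries of $F$, and that the feasible region is compact (closed and bounded), so a minimiser $F^\star$ exists. The real content of the argument is then to show that the minimum value over the whole polytope is already achieved at one of its vertices. To this end I would take an optimal $F^\star$ and, using Birkhoff--von Neumann, write it as a convex combination $F^\star=\sum_{m}\lambda_m P^m$ with $\lambda_m\ge 0$ and $\sum_m\lambda_m=1$.

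Linearity of the objective then gives $\sum_{i,j}D_{ij}F^\star_{ij}=\sum_m\lambda_m\bigl(\sum_{i,j}D_{ij}P^m_{ij}\bigr)$, so that the cost at $F^\star$ is a weighted average of the costs at the permutation matrices. Since a weighted average is never strictly below its smallest term, there must be some $P^m$ with $\sum_{i,j}D_{ij}P^m_{ij}\le \sum_{i,j}D_{ij}F^\star_{ij}$; as $P^m$ is itself feasible and $F^\star$ is optimal, this $P^m$ is optimal too, which is exactly the claim. (Equivalently one can phrase this by contradiction: if no permutation matrix were optimal, each term $\sum_{i,j}D_{ij}P^m_{ij}$ would strictly exceed $\sum_{i,j}D_{ij}F^\star_{ij}$, and the convex-combination identity would then force the absurd inequality $\sum_{i,j}D_{ij}F^\star_{ij}>\sum_{i,j}D_{ij}F^\star_{ij}$.)

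I expect the only genuine subtlety — and the step I would treat most carefully — to be the identification of the feasible set with the Birkhoff polytope and the assertion that its vertices are precisely the permutation matrices; this is exactly what \Thm{t1244} supplies, so in effect the hard structural work has been done in advance. Everything after that is the standard averaging inequality for convex combinations of a linear functional, which is routine.
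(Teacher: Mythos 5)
Your proposal is correct and follows essentially the same route as the paper's own proof: both invoke the Birkhoff--von Neumann theorem to write the optimal doubly stochastic $F^\star$ as a convex combination of permutation matrices and then apply the averaging inequality for the linear objective (your direct ``a weighted average cannot be strictly below its smallest term'' phrasing and the paper's proof by contradiction are the same argument in two guises, as your parenthetical remark already notes).
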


Now, for permutation $\sigma \in \{1 \dots N\}\Fun \{1\dots N\}$ write $\underline{c}^\sigma$ to be the vector with components permuted by $\sigma$, so that $\underline{c}^\sigma_i = \underline{c}_{\sigma(i)}$.

\begin{lemma}{}
\label{lemma-extend}
Let $d_\mathcal{X}$ be a pseudometric on $\mathcal{X}$ and let $K: \mathcal{X} \rightarrow \mathbb{D}\mathcal{X}$ be a mechanism satisfying $d_\mathcal{X}$-privacy. Let $x, z \in \mathbb{B}\mathcal{X}$ be bags of length $N$ with corresponding vectors $\underline x, \underline z \in \mathcal{X}^N$. Then $K$ can be extended to a mechanism $K^\star: \mathbb{B}\mathcal{X} \rightarrow \mathbb{D}(\mathbb{B}\mathcal{X})$ satisfying:
\[
    K^\star(x)(\{z\}) = \sum_\sigma K(\underline x)(\{\underline z^{\sigma}\}) 
\]
where the sum is over unique permutations of elements in $\underline z$. 

\begin{proof}

Recall that a mechanism is a probabilistic function; we have to show that there is a mechanism $K^\star$ that outputs a valid distribution over bags in $\mathbb{B} \mathcal{X}$ given an input bag in $\mathcal{X}$. We show this by constructing the required mechanism.

We can easily extend $K$ to a mechanism $\underline K': \mathcal{X}^N \rightarrow \mathbb{D}(\mathcal{X}^N)$ operating on \emph{vectors} by applying $K$ to each element of $\underline x$ in order. That is,
\[
     \underline K'(\underline{x})(\{\underline{z}\}) = K(\underline x_1)(\underline z_1) \times K(\underline x_2)(\underline z_2) \times \ldots \times K(\underline x_n)(\underline z_n)
\]

$\underline K'(\underline x)$ defines a valid probability distribution for any $\underline x$ since we sum over all possible output vectors $\underline z$. 

Observe that the mechanism $\underline K'$ produces the same output distribution regardless of the ordering of elements in $\underline x$ (since the mechanism $K$ operates on each element independently). Therefore the distribution over bags depends only on the different permutations of elements in the output $\underline{z}$. That is, 
\[
    K^\star(x)(\{z\}) = \sum_\sigma K(\underline x_1)(\underline z_{\sigma(1)}) \times K(\underline x_2)(\underline z_{\sigma(2)}) \times \ldots \times K(\underline x_n)(\underline z_{\sigma(n)})
\]

Here $K^\star(x)$ also defines a valid probability distribution, since it produces the same distribution as $K'(\underline{x})$ except that the output probabilities are `collected' for all permutations of the output vector. Thus $K^\star$ is the required mechanism.
\qed
\end{proof}
\end{lemma}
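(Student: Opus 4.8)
The plan is to construct $K^\star$ explicitly as the pushforward of a product mechanism on vectors along the order-forgetting map, and then to verify in turn that this pushforward (i) is well defined independently of how the input bag is listed as a vector, (ii) is a genuine probability distribution on $\mathbb{B}\mathcal{X}$, and (iii) assigns to each singleton $\{z\}$ exactly the claimed sum. It is worth noting at the outset that the $d_\mathcal{X}$-privacy hypothesis on $K$ plays no role here: this lemma is purely a measure-theoretic bookkeeping statement about the independent application of a Markov kernel, and privacy only enters later (through \Eqn{e1939}) once this construction is used inside \Thm{emprivacy}.

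First I would fix a vector representative $\underline{x} \in \mathcal{X}^N$ of the input bag $x$ and define the product kernel $\underline{K}': \mathcal{X}^N \rightarrow \mathbb{D}(\mathcal{X}^N)$ by $\underline{K}'(\underline{x})(\{\underline{v}\}) = \prod_{1\leq i\leq N} K(\underline{x}_i)(\{\underline{v}_i\})$, extended to measurable sets as the $N$-fold product measure. Since each $K(\underline{x}_i)$ is a distribution, Tonelli for finite products shows $\underline{K}'(\underline{x})$ is a probability distribution on $\mathcal{X}^N$. Writing $\flat: \mathcal{X}^N \rightarrow \mathbb{B}\mathcal{X}$ for the map that forgets the ordering of components, I would then set $K^\star(x) := \flat_\ast\big(\underline{K}'(\underline{x})\big)$, the pushforward measure, so that by definition $K^\star(x)(\{z\}) = \underline{K}'(\underline{x})\big(\flat^{-1}\{z\}\big)$. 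The fibre $\flat^{-1}\{z\}$ is precisely the set of distinct orderings of $\underline{z}$, i.e.\ the distinct vectors $\underline{z}^\sigma$, so summing the point masses over this fibre gives $K^\star(x)(\{z\}) = \sum_\sigma \underline{K}'(\underline{x})(\{\underline{z}^\sigma\})$, which is the asserted identity once $\underline{K}'$ is identified with the vector extension $K(\underline{x})$ of the statement.

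The step I expect to require the most care is well-definedness: a bag $x$ may be listed as many vectors $\underline{x}^\tau$, and I must check that $\flat_\ast(\underline{K}'(\underline{x}^\tau))$ is independent of $\tau$. This follows because permuting the input merely relabels the factors $K(\underline{x}_i)(\cdot)$ in each product; concretely $\underline{K}'(\underline{x}^\tau)(\{\underline{v}\}) = \underline{K}'(\underline{x})(\{\underline{v}^{\tau^{-1}}\})$, and since the fibre $\flat^{-1}\{z\}$ is stable under the component-permutation $\underline{v}\mapsto \underline{v}^{\tau^{-1}}$, reindexing shows the total over the fibre is unchanged. The second delicate point is multiplicity: when $\underline{z}$ has repeated entries the symmetric group over-counts, so ``unique permutations'' must be read as distinct vectors $\underline{z}^\sigma$ (equivalently, representatives of $S_N/\mathrm{Stab}(\underline{z})$), and summing over exactly these distinct vectors is precisely what the pushforward does, so no double counting occurs.

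Finally, validity of $K^\star(x)$ as a distribution is immediate from the pushforward construction, since the fibres $\flat^{-1}\{z\}$ over the distinct output bags partition $\mathcal{X}^N$ and hence $\sum_z K^\star(x)(\{z\}) = \underline{K}'(\underline{x})(\mathcal{X}^N) = 1$; this simultaneously confirms that $K^\star$ has the required type $\mathbb{B}\mathcal{X} \rightarrow \mathbb{D}(\mathbb{B}\mathcal{X})$. With both the formula and validity established the lemma is complete, and it is exactly this identity $K^\star(x)(\{z\}) = \sum_\sigma \underline{K}'(\underline{x})(\{\underline{z}^\sigma\})$ that is invoked as \Eqn{e1229} in the proof of \Thm{emprivacy}.
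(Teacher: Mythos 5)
Your proposal is correct and takes essentially the same route as the paper: both construct the independent product kernel $\underline{K}'$ on $\mathcal{X}^N$ and obtain $K^\star$ by collecting its mass over the distinct orderings of the output bag, which is exactly your pushforward along the order-forgetting map. Your version is in fact a more careful rendering of the paper's argument --- the paper asserts, slightly loosely, that $\underline{K}'$ yields ``the same output distribution'' regardless of the input ordering, whereas what is truly invariant is the induced distribution on bags, which is precisely the well-definedness point you verify by reindexing the fibre; your handling of repeated entries and the total-mass check match the paper's intent.
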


We are now ready to prove \Thm{emprivacy}.\\
\\
\textbf{Theorem \ref{emprivacy}.}
\textit{
Let $d_{\mathcal{X}}$ be a pseudometric on $\mathcal{X}$ and let $K: \mathcal{X} \rightarrow \mathbb{D}\mathcal{X}$ be a mechanism satisfying $\epsilon d_{\cal X}$-privacy, i.e.
\begin{equation*}
K(x)(Z) \Wide{\leq} e^{\epsilon d_{\cal X}(x, x')}K(x')(Z)~,~ \textit{for all~} x, x'\in {\cal X}~Z \subseteq {\cal X}.
\end{equation*}
Let $K^*: \mathbb{B}\mathcal{X} \Fun \Dist( \mathbb{B}\mathcal{X})$ be the mechanism obtained by applying $K$ independently to each element of $X$ for any $X \in \mathbb{B}\mathcal{X}$. Denote by $K^\star\restrict$ the restriction of $K^\star$ to bags of fixed size $N$.  Then $K^\star\restrict$ satisfies $\epsilon N {E_{d_\mathcal{X}}}$-privacy.}

\begin{proof}
Let $b, b'$ be input bags of size $N$, and $c$ a possible output bag of $K^\star$. Observe that $c$ also has size $N$. Therefore, the Earth Mover's constraints in \Def{d0938} can be rewritten as:
\begin{align*}
& \sum\limits_{i = 1}^{N} F_{ij} = \frac{1}{|N|} \quad \textit{and}\quad
        \sum\limits_{j = 1}^{N} F_{ij} = \frac{1}{|N|}~, \quad  F_{ij} \ge 0, \quad1 \le i,j \le N  
\end{align*}
This has the same form as in \Lem{thm:perm-matrix}, thus the optimal transportation for $E_{d_\mathcal X}(b, c)$ is achieved by moving each word in bag $b$ to a single word in bag $c$. The same is true for $E_{d_\mathcal X}(b', c)$ and $E_{d_\mathcal X}(b, b')$.
Next, we fix $\underline b$ and $\underline b'$ to be vector representations of respectively $b, b'$ in $\mathcal{X}^N$ such that the optimal transportation for $E_{d_\mathcal X}(b, b')$ is
\begin{equation}\label{e1949}
\Earth{d_{\cal X}}(b, b') \Wide{=}1/N {\times}\sum_{1\leq i\leq N} d_{\cal X}(\underline{b}_i, \underline{b}'_i) \Wide{=} \Manhattan{d_{\cal X}}(\underline{b}, \underline{b}')/N~.
\end{equation}
That is, we fix the ordering of the elements in the vectors $\underline b, \underline b'$ so that the Manhattan distance is exactly the Earth Mover's distance (which we know can be done thanks to \Lem{thm:perm-matrix}). 
Finally, from \Lem{lemma-extend} we know that the following equality between probabilities holds:
\begin{equation}\label{e1239}
K^\star(b)\{c\} \Wide{=} \sum_{\sigma}{K}(\underline{b})\{\underline{c}^\sigma\}~,
\end{equation}
where the summation is over all permutations that give distinct vector representations of $c$. We now compute directly:

\begin{Reason}
\Step{}
{K^\star(b)\{c\}}
\StepR{$=$}{\Eqn{e1239} for $b, c$}
{\sum_{\sigma}{K}(\underline{b})\{\underline{c}^\sigma\}}
\StepR{$\leq$}{\Eqn{e1939} for $\underline{b}, \underline{b'}, \underline{c}$}
{\sum_{\sigma}e^{\epsilon\Manhattan{d}(\underline{b}, \underline{b}')}{K}(\underline{b}')\{\underline{c}^\sigma\}}
\StepR{$=$}{Arithmetic and \Eqn{e1949}}
{e^{\epsilon N\Earth{d}(\underline{b}, \underline{b}')}{\sum_{\sigma}{K}(\underline{b}')\{\underline{c}^\sigma\}}}
\StepR{$=$}{\Eqn{e1239} for $b', c$}
{e^{\epsilon N\Earth{d}(\underline{b}, \underline{b}')}K^\star(b')\{c\}~,}
\end{Reason}

Thus the mechanism $K^\star$ satisfies $\epsilon N \Earth{d_\mathcal{X}}$-privacy for singleton sets, and by extension for all finite sets $Z \subseteq \mathbb{B} \cal{X}$.

\qed
\end{proof}

\postlemma*

\begin{proof}
Let $b, c \in  \mathbb{B}\mathcal{X}$; we reason as follows.
\begin{Reason}
\Step{}
{(K ; K')(b)(Z)}
\StepR{$=$}{\Eqn{d1143}}
{\sum_{b' \In  \mathbb{B}\mathcal{X}} K(b)(\{ b'\}){\times} K'(b')(Z)}
\StepR{$\leq$}{$K$ is $\epsilon E_d$-private}
{\sum_{b' \In  \mathbb{B}\mathcal{X}} e^{\epsilon E_{d_{\cal X}}(b, c)}K(c)(\{ b'\}){\times} K'(b')(Z)}
\WideStepR{$=$}{\Eqn{d1143}; arithmetic}
{e^{\epsilon E_{d_{\cal X}}(b, c)}(K ; K')(c)(Z)~.}
\end{Reason}
\end{proof}


\subsection{Proofs Omitted from \Sec{s1542}}\label{A0476}


\laplemma*

\begin{proof}
We note first that the n-dimensional Laplacian is spherically symmetric; that is, we want the length of the random vector to follow a Laplacian distribution independently from its direction.  Therefore the Laplacian has a stochastic representation: 
\begin{align}
      X = RU
\end{align}
where $R = |\!|X|\!|$ and $U = X /  |\!|X|\!|$. i.e. $U$ is a random variable drawn from the uniform distribution on the $n$-sphere (that is, $U \sim \Unif{n}(p)$) and $R$ is a `scaling' component independent from $U$.

We now show that the radial component is drawn from the Gamma distribution. From \Def{d1141} we have that 
\[
  \int \dots \int_{\Real^n} \Lap{n}{\epsilon}({v}) dv_1 \dots dv_n = 1
\]
We perform a conversion to spherical co-ordinates $ (r, \theta_1, \theta_2, \ldots, \theta_{n-1})$ using the following transformation \citep{mustard1964}:
\begin{align*}
    v_1 &= r \cos\theta_1 \\
    v_2 &= r \sin\theta_1 \cos\theta_2 \\
    v_3 &= r \sin\theta_1 \sin\theta_2 \cos\theta_3 \\
    &\ldots \\
    v_{n-1} &= r \sin\theta_1 \sin\theta_2 \ldots \sin\theta_{n-2} \cos\theta_{n-1} \\
    v_n &= r \sin\theta_1 \sin\theta_2 \ldots \sin\theta_{n-2} \sin\theta_{n-1}     
\end{align*}
where $r = \sqrt{v_1^2 + \dots + v_n^2}$. The bounds for the new co-ordinates are:
\begin{align}\label{bounds}
    0 \le r < \infty ;~ 
    0 \le \theta_1 \dots \theta_{n-2} \le \pi ;~
    0 \le \theta_{n-1} \le 2 \pi
\end{align}
We also need the Jacobian determinant, denoted $\frac{\partial(v_1, v_2, \ldots, v_n)}{\partial(r, \theta_1, \ldots, \theta_{n-1})}$. This is well-known to be:
\begin{align}\label{eqnjac}
    \frac{\partial(v_1, v_2, \ldots, v_n)}{\partial(r, \theta_1, \ldots, \theta_{n-1})} = r^{n-1} \sin^{n-2}\theta_1 \sin^{n-3}\theta_2 \ldots \sin^2\theta_{n-3} \sin\theta_{n-2}
\end{align}
And therefore we reason:
\begin{Reason}
\Step{}
{ \int \dots \int_{\Real^n} \Lap{n}{\epsilon}({v}) dv_1 \dots dv_n}
\WideStepR{$=$}{\Def{d1141}; bounds}
{ \int_{-\infty}^{\infty} \dots \int_{-\infty}^{\infty} c_n^\epsilon{\times}e^{-\epsilon |\!|{v}|\!|} ~dv_1 \dots dv_n}
\WideStepR{$=$}{$|\!|v|\!| = \sqrt{v_1^2 + \dots + v_n^2}$}
{ \int_{-\infty}^{\infty} \dots \int_{-\infty}^{\infty}  c_n^\epsilon{\times}e^{-\epsilon \sqrt{v_1^2 + \dots + v_n^2}} ~dv_1 \dots dv_n}
\WideStepR{$=$}{Change of variables; \Eqn{bounds}}
{\int_{0}^{\infty} \int _{0}^{\pi} \dots \int_{0}^{2\pi} c_n^\epsilon{\times}e^{-\epsilon r}\frac{\partial(v_1, v_2, \ldots, v_n)}{\partial(r, \theta_1, \ldots, \theta_{n-1})}  ~dr d\theta_1\dots d\theta_{n{-}1}}
\WideStepR{$=$}{\Eqn{eqnjac}}
{\int_{0}^{\infty} \int _{0}^{\pi} \dots \int_{0}^{2\pi}c_n^\epsilon{\times}e^{-\epsilon r}  r^{n-1} \sin^{n-2}\theta_1 \ldots \sin^2\theta_{n-3} \sin\theta_{n-2} ~dr d\theta_1\dots d\theta_{n{-}1}}
\WideStepR{$=$}{Independent variables; $c_n^\epsilon = c_0 \times c_1 \times \dots \times c_{n-1}$}
{\int_{0}^{\infty} c_0{\times}e^{-\epsilon r}  r^{n-1} ~dr \int _{0}^{\pi} c_1 \sin^{n-2}\theta_1 ~d\theta_1 \dots  \int _{0}^{\pi} c_{n-2} \sin\theta_{n-2} ~d \theta_{n-2} \int_{0}^{2\pi} c_{n-1} ~d \theta_{n-1}
~.}
\end{Reason}


We recognise the form of this integral as the stochastic representation of a spherically symmetric distribution. The first component is the radial component and the remainder of the integrals represent the uniform distribution on the $n$-sphere. We can compute the constant $c_0$ by equating the radial component with a univariate distribution. That is,

\begin{Reason}
\Step{}
{\int_{0}^{\infty} c_0 {\times}e^{-\epsilon r}  r^{n-1} ~dr = 1}
\WideStepR{$\Implies$}{$c_0$ is constant}
{c_0 \int_{0}^{\infty} e^{-\epsilon r}  r^{n-1} ~dr = 1}
\WideStepR{$\Implies$}{Integration by parts}
{c_0 ( \frac{-1}{\epsilon} r^{n-1}e^{-\epsilon r} ) \Big|_0^{\infty} - c_0 \int_{0}^{\infty} \frac{-1}{\epsilon} e^{-\epsilon r}  (n-1) {\times} r^{n-2} ~dr = 1}
\WideStepR{$\Implies$}{$\lim_{r\to\infty} r^{n-1} e^{-\epsilon r} = 0$; simplifying}
{c_0 (n-1) / \epsilon \int_{0}^{\infty} e^{-\epsilon r} r^{n-2} ~dr = 1}
\WideStepR{$\Implies$}{Induction on $r$}
{c_0 (n-1)! / \epsilon^n = 1}
\WideStepR{$\Implies$}{Rearranging}
{c_0 = \epsilon^n / (n-1)!}
\end{Reason}

And now we can deduce the PDF of the radial distribution.


\begin{Reason}
\Step{}
{c_0 \times r^{n-1} e^{-\epsilon r}}
\WideStepR{$=$}{Using $c_0$ proven above}
{ r^{n-1} e^{-\epsilon r} {\times} \epsilon^n / (n-1)!}
\WideStepR{$=$}{\Def{d1732}}
{\Gam{n}{1/\epsilon}(r)}
\end{Reason}

\end{proof}


\begin{corollary}\label{c1234}
The n-dimensional Laplacian $\Lap{n}{\epsilon}({v}) = c_n^\epsilon \times e^{-\epsilon |\!|v|\!|}$ has constant $c_n^\epsilon$ given by
\[
          c_n^\epsilon = \epsilon^n / (n{-}1)! ~S_{n{-}1}(1) 
\]
where $S_{n{-}1}(1)$ is the surface area of the $n$-dimensional unit sphere.
\end{corollary}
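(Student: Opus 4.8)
The plan is to read the constant off directly from the normalisation condition in \Def{d1141}. Since $\Lap{n}{\epsilon}(v) = c_n^\epsilon\, e^{-\epsilon|\!|v|\!|}$, that condition says $1 = c_n^\epsilon \int_{\Real^n} e^{-\epsilon |\!|v|\!|}\, dv$, so it suffices to evaluate $I \Defs \int_{\Real^n} e^{-\epsilon|\!|v|\!|}\, dv$ and set $c_n^\epsilon = 1/I$. The point is that this integral has already been essentially computed inside the proof of \Lem{l1328}: passing to spherical coordinates and using the Jacobian \Eqn{eqnjac} factors the integrand into a radial part $e^{-\epsilon r}\, r^{n-1}$ and an angular part given by the product of the $\sin^k\theta$ terms, so $I$ splits as a product of a single radial integral and an integral over the angular coordinates with ranges \Eqn{bounds}.

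First I would evaluate the radial integral $\int_0^\infty e^{-\epsilon r}\, r^{n-1}\, dr$. This is the Gamma integral $\Gamma(n)/\epsilon^n = (n{-}1)!/\epsilon^n$, and indeed it is exactly the reciprocal of the radial normaliser $c_0 = \epsilon^n/(n{-}1)!$ obtained by repeated integration by parts (induction on the exponent) in the proof of \Lem{l1328}. Second, I would observe that the remaining angular integral $\int \sin^{n-2}\theta_1 \cdots \sin\theta_{n-2}\, d\theta_1\cdots d\theta_{n-1}$ over the ranges \Eqn{bounds} is, by definition, the surface area $S_{n-1}(1)$ of the unit sphere $B^n \subseteq \Real^n$ --- precisely the quantity whose reciprocal normalises the uniform distribution in \Def{d1805}.

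Combining the two factors gives $I = \frac{(n{-}1)!}{\epsilon^n}\, S_{n-1}(1)$, and hence $c_n^\epsilon = 1/I = \frac{\epsilon^n}{(n{-}1)!\, S_{n-1}(1)}$, as claimed. Equivalently, in the notation of \Lem{l1328} this reads $c_n^\epsilon = c_0 \times (c_1\cdots c_{n-1})$ with the angular factor $c_1\cdots c_{n-1} = 1/S_{n-1}(1)$.

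There is no real obstacle here: both integrals were already handled in \Lem{l1328}, so the corollary is largely a matter of assembling them. The only point needing care is the correct placement of $S_{n-1}(1)$ --- it belongs in the denominator, as the normaliser of the angular (uniform) part. This is confirmed by the low-dimensional values stated after \Def{d1141}: taking $S_0(1)=2$ (the two-point set $\{\pm1\}$) recovers $c_1^\epsilon = \epsilon/2$, and $S_1(1)=2\pi$ recovers $c_2^\epsilon = \epsilon^2/2\pi$.
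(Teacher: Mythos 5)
Your proof is correct and follows essentially the same route as the paper's: both exploit the spherical-coordinate factorisation from \Lem{l1328}, with the radial factor $\int_0^\infty e^{-\epsilon r} r^{n-1}\,dr = (n{-}1)!/\epsilon^n$ matching the lemma's $c_0$ and the angular factor identified as the unit-sphere surface area $S_{n-1}(1)$, so that $c_n^\epsilon = \epsilon^n/\bigl((n{-}1)!\,S_{n-1}(1)\bigr)$. Your explicit low-dimensional sanity checks ($c_1^\epsilon = \epsilon/2$, $c_2^\epsilon = \epsilon^2/2\pi$) and your note that $S_{n-1}(1)$ belongs in the denominator are welcome clarifications of a point the paper's terse proof (and its typeset formula) leaves slightly ambiguous.
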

\begin{proof} 
This follows from the observation that the integral
\[
  \int _{0}^{\pi} c_1 \sin^{n-2}\theta_1 ~d\theta_1 \dots  \int _{0}^{\pi} c_{n-2} \sin\theta_{n-2} ~d \theta_{n-2} \int_{0}^{2\pi} c_{n-1}  ~d \theta_{n-1}
\]
must sum to $1$ since we defined $c_0$ such that the radial integral was a probability distribution.
\end{proof}

\begin{lemma}\label{l1600}
For higher dimensions than $1$ the probability of a random vector being selected within a region is determined by a multiple integral; for the special case that the region is $D(R) \Defs \{v~|~ |\!|v|\!| \leq R \}$, then when $v$ is sampled from a $\Lap{n}{\epsilon}$ distribution, the probability that it is contained in $D(R)$, denoted ${\cal L}_n^\epsilon(R)$, is given by:
\begin{equation}
{\cal L}_n^\epsilon(R) \Wide{\Defs} 1{-} e^{-\epsilon R} \times e_{n-1}(\epsilon R)~,
\end{equation} 
where $e_k(\alpha) \Defs \sum_{0\leq i \leq k} \alpha^i/i!$~.
\end{lemma}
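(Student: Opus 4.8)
The plan is to exploit the radial decomposition of $\Lap{n}{\epsilon}$ established in \Lem{l1328}. Since the region $D(R)$ is spherically symmetric --- membership depends only on $|\!|v|\!|$ --- the uniform angular component $\Unif{n}$ integrates to $1$, and the probability ${\cal L}_n^\epsilon(R)$ collapses to a single radial integral against the Gamma density:
\[
{\cal L}_n^\epsilon(R) \Wide{=} \int_0^R \Gam{n}{1/\epsilon}(r)~dr \Wide{=} \int_0^R \frac{\epsilon^n r^{n{-}1}e^{-\epsilon r}}{(n{-}1)!}~dr~,
\]
where I have substituted $\delta = 1/\epsilon$ into \Def{d1732}. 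Thus it suffices to evaluate the cumulative distribution function of an Erlang (integer-shape Gamma) distribution at $R$.

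First I would pass to the complementary tail integral $I_n(R) \Defs \int_R^\infty \frac{\epsilon^n r^{n{-}1}e^{-\epsilon r}}{(n{-}1)!}~dr$, so that ${\cal L}_n^\epsilon(R) = 1 - I_n(R)$, and prove by induction on $n$ that $I_n(R) = e^{-\epsilon R}e_{n{-}1}(\epsilon R)$. The base case $n=1$ is immediate, since $I_1(R) = \int_R^\infty \epsilon e^{-\epsilon r}~dr = e^{-\epsilon R} = e^{-\epsilon R}e_0(\epsilon R)$, using $e_0(\epsilon R)=1$.

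For the inductive step I would integrate by parts with $u = \frac{(\epsilon r)^{n{-}1}}{(n{-}1)!}$ and $dv = \epsilon e^{-\epsilon r}~dr$, giving $v = -e^{-\epsilon r}$ and $du = \frac{\epsilon(\epsilon r)^{n{-}2}}{(n{-}2)!}~dr$. The boundary term vanishes at $\infty$ (because $r^{n{-}1}e^{-\epsilon r}\to 0$) and contributes $\frac{(\epsilon R)^{n{-}1}}{(n{-}1)!}e^{-\epsilon R}$ at the lower limit, while the surviving integral is exactly $I_{n{-}1}(R)$. Hence $I_n(R) = \frac{(\epsilon R)^{n{-}1}}{(n{-}1)!}e^{-\epsilon R} + I_{n{-}1}(R)$, and feeding in the inductive hypothesis $I_{n{-}1}(R) = e^{-\epsilon R}e_{n{-}2}(\epsilon R)$ yields $I_n(R) = e^{-\epsilon R}e_{n{-}1}(\epsilon R)$, since $e_{n{-}2}(\epsilon R) + \frac{(\epsilon R)^{n{-}1}}{(n{-}1)!} = e_{n{-}1}(\epsilon R)$. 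Substituting into ${\cal L}_n^\epsilon(R) = 1 - I_n(R)$ then gives the claimed formula.

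The calculations here are entirely routine; the only point demanding care is the bookkeeping of factorials and powers of $\epsilon$, so that each integration-by-parts step lowers the shape parameter by exactly one and reproduces precisely the next partial sum $e_k$. The genuinely substantive input --- reducing the $n$-fold integral to a one-dimensional one --- is already supplied by \Lem{l1328}, so I expect the main obstacle to be notational rather than conceptual.
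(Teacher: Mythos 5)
Your proposal is correct and follows essentially the same route as the paper: both reduce the $n$-fold integral to the radial Gamma (Erlang) integral via \Lem{l1328}, then establish the closed form by integration by parts and induction on $n$, with base case $n=1$. Working with the complementary tail integral $I_n(R)=\int_R^\infty$ rather than the paper's $\int_0^R$ is only a cosmetic reformulation of the same recursion.
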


\begin{proof}
Using \Lem{l1328} we can calculate this probability using the radial (Gamma) distribution, since we defined the angular and radial distributions independently. That is,
\begin{align}\label{e3245}
    {\cal L}_n^\epsilon(R) = \int_0^R ~ r^{n{-}1} e^{-\epsilon r} {\times} \epsilon^n / (n{-}1)!~ dr
\end{align}
This has well-known CDF given by 
\[ 
   {\cal L}_n^\epsilon(R) = 1 - \sum_{k=0}^{n{-}1} \frac{(R \epsilon)^k}{k!} e^{-\epsilon R}
\]
which we now prove. We note that
\begin{align}\label{e3246}
   {\cal L}_1^\epsilon(R) = \int_0^R \epsilon e^{-\epsilon r} ~dr ~=~ 1 - e^{-\epsilon R}
\end{align}
and using integration by parts we see that
\begin{align}\label{e3247}
   {\cal L}_n^\epsilon(R) &= \epsilon^n / (n{-}1)! {\times} [ \frac{-r^{n{-}1}}{\epsilon} e^{-\epsilon r} ] \Big|_0^R + \int_0^R r^{n{-}2} e^{-\epsilon r} {\times} \epsilon^{n{-}1} / (n{-}2)! ~dr \nonumber \\
            &= {-}\frac{\epsilon^{n{-}1} R^{n{-}1}}{(n{-}1)!} e^{-\epsilon R} +  {\cal L}_{n{-}1}^\epsilon(R)
\end{align}
And therefore
\begin{Reason}
\Step{}
{\textit{Probability of outputting an element in} ~ D(R)}
\WideStepR{$=$}{\Eqn{e3245}}
{\int_0^R ~ r^{n{-}1} e^{-\epsilon r} {\times} \epsilon^n / (n{-}1)!~ dr}
\WideStepR{$=$}{\Eqn{e3246} and \Eqn{e3247}; induction}
{1-e^{-\epsilon R} [1 + R\epsilon + \frac{R^2 \epsilon^2}{2!} + \frac{R^3 \epsilon^3}{3!} + \dots + \frac{R^{n{-}1} \epsilon^{n{-}1}}{(n{-}1)!}]}
\WideStepR{$=$}{Arithmetic}
{1-e^{-\epsilon R}{\times} \sum_{k=0}^{n{-}1} \frac{R^k\epsilon^k}{k!}}
\WideStepR{$=$}{Simplifying}
{1-e^{-\epsilon R}{\times} e_{n{-}1}(\epsilon R)}
\end{Reason}
\end{proof}

We now present the proof of \Thm{l1534}. It follows as a consequence of of the next theorem.


\begin{theorem}
\[
{\int_{v \in Z_M}\prod_{1\leq i\leq N} \Lap{n}{\epsilon}(\underline{b}_i {-} v_i) dv_1\dots dv_N}  \Wide{\geq} 1- e^{-\epsilon N \Delta}e_{n{-}1}(\epsilon N\Delta)\frac{((c_n^\epsilon V_n(N\Delta))^N{-} 1)}{c_n^\epsilon V_n(N\Delta){-} 1}
\]

%
\begin{proof}
Let $\underline{b} \in (\Real^n)^N$ be a (fixed) vector representation of the bag $b$. For $v\in (\Real^n)^N$, let  $v^\circ \in {\mathbb B}\Real^n$ be the bag comprising the $N$ components if $v$.  Observe that $NE_{|\!|\cdot |\!|}(b, v^\circ) \leq  M_{|\!|\cdot |\!|}(\underline{b}, v)$, and so  
\begin{equation}\label{e1207}
Z_M = \{ v~ |~ M_{|\!|\cdot |\!|}(\underline{b}, v) \le N\Delta\} \Wide{\subseteq} \{ v ~|~ E_{|\!|\cdot |\!|}(b, v^\circ) \le \Delta\} = Z_E~.
\end{equation}
 Thus the probability of outputting an element of $Z$ is the same as the probability of outputting $Z_E$, and by \Eqn{e1207} that is at least the probability of outputting an element from $Z_M$ by applying a standard n-dimensional Laplace mechanism  to each of the components of $\underline{b}$.  We can now compute:
\begin{Reason}
\Step{}
{\textit{Probability of outputting an element in} ~ Z_E}
\WideStepR{$\geq$}{\Eqn{e1207}}
{\int_{v \in Z_M}\prod_{1\leq i\leq N} \Lap{n}{\epsilon}(\underline{b}_i {-} v_i) dv_1\dots dv_N} 
\WideStepR{$=$}{\Lem{l1328}}
{\int_{v \in Z_M}\prod_{1\leq i\leq N} c_n^\epsilon  e^{-\epsilon |\!| \underline{b}_i {-} v_i|\!|} dv_1\dots dv_N~.} 
\end{Reason}
The result follows by completing the integration, and applying simplifying approximations.


Let

\begin{equation}\label{e1548}
I_N \Wide{\Defs} {\int_{0 \leq \sum_{1 \leq j \leq N} |\!|v_j|\!| \leq R}~~ \prod_{1\leq i\leq N} \Lap{n}{\epsilon}(v_i)dv_1\dots dv_N}
\end{equation}

We rewrite RHS this as:

\begin{equation}\label{e1548-B}
{\int_{0 \leq \sum_{1 \leq j \leq N{-}1} |\!|v_j|\!| \leq R}~~ \prod_{1\leq i\leq N-1} \Lap{n}{\epsilon}(v_i)  \int_{0 \leq |\!|v_N|\!| \leq (R{-}  \sum_{1 \leq j \leq N{-}1}  |\!|v_j|\!|)}  \Lap{n}{\epsilon}(v_N) dv_N ~ dv_1\dots dv_{N{-1}}}
\end{equation}

Using \Lem{l1600} we can simplify the integral for $v_N$, to obtain:

\begin{equation}\label{e1548-C}
{\int_{0 \leq \sum_{1 \leq j \leq N{-}1} |\!|v_j|\!| \leq R}~~ \prod_{1\leq i\leq N-1} \Lap{n}{\epsilon}(v_i)  [ 1 - e^{-\epsilon(R{-}  \sum_{1 \leq j \leq N{-}1}  |\!|v_j|\!|)} e_{n{-}1}(\epsilon(R{-}  \sum_{1 \leq j \leq N{-}1}  |\!|v_j|\!|))  ] ~ dv_1\dots dv_{N{-1}}}
\end{equation}

Rewriting the product $\prod_{1\leq i\leq N-1} \Lap{n}{\epsilon}(v_i)$ as $(c_n^\epsilon)^{N{-}1} e^{-\epsilon(\sum_{1 \leq j \leq N{-}1}  |\!|v_j|\!|) }$ and using \Eqn{e1548} we can simplify \Eqn{e1548-C} to

\begin{equation}\label{e1548-D}
 I_{N{-}1} - e^{-\epsilon R} (c_n^\epsilon)^{N{-}1}{\int_{0 \leq \sum_{1 \leq j \leq N{-}1} |\!|v_j|\!| \leq R}~~    e_{n{-}1}(\epsilon(R{-}  \sum_{1 \leq j \leq N{-}1}  |\!|v_j|\!|)) ~ dv_1\dots dv_{N{-1}}}
\end{equation}

We make two approximations. Observe that $e_{n{-}1}$ is an increasing function, and that within the region of integration we have
\[
(\epsilon(R{-}  \sum_{1 \leq j \leq N{-}1}  |\!|v_j|\!|)) \Wide{\leq} \epsilon R~,
\]
so that \Eqn{e1548-D} is at least
\begin{equation}\label{e1548-E}
 I_{N{-}1} - e^{-\epsilon R} (c_n^\epsilon)^{N{-}1} e_{n{-}1}(\epsilon R){\int_{0 \leq \sum_{1 \leq j \leq N{-}1} |\!|v_j|\!| \leq R} 1 ~ dv_1\dots dv_{N{-1}}}
\end{equation}

The final simplification is to note that the integral is no more than $V_n(R)^{N{-}1}$, where $V_n(R)$ is the volume of an $n$-dimensional sphere. Putting all this together we have:

\begin{equation}\label{e1548-F}
I_N \Wide{\geq}  I_{N{-}1} - e^{-\epsilon R} (c_n^\epsilon)^{N{-}1} e_{n{-}1}(\epsilon R) V_n(R)^{N{-}1}
\end{equation}

We can now unwind this inequation to obtain the result, noting that $R \Defs N\Delta$.
\end{proof}
\end{theorem}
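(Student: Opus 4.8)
The plan is to turn the $N$-fold integral into a recursion on $N$ by peeling off one factor of the product at a time. First I would use translation invariance of the Laplacian to recentre each $\underline{b}_i$ at the origin, so that bounding the left-hand side becomes bounding $I_N \Defs \int_{\sum_i \Euclidean{v_i}\le R} \prod_{1\le i\le N}\Lap{n}{\epsilon}(v_i)\,dv_1\dots dv_N$ with $R \Defs N\Delta$; after the shift the region is exactly the (recentred) $Z_M$. The whole argument then reduces to establishing a recurrence relating $I_N$ to $I_{N-1}$ and unwinding it.

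The key step is to integrate out the last variable $v_N$ while holding $v_1,\dots,v_{N-1}$ fixed. Writing $S \Defs \sum_{1\le j\le N-1}\Euclidean{v_j}$, the constraint $\sum_i\Euclidean{v_i}\le R$ confines $v_N$ to the ball of radius $R-S$, and by \Lem{l1600} the inner (radial) integral is exactly ${\cal L}_n^\epsilon(R-S) = 1 - e^{-\epsilon(R-S)}e_{n-1}(\epsilon(R-S))$. The leading ``$1$'' reproduces $I_{N-1}$; for the correction term I would use \Lem{l1328} to write $\prod_{i<N}\Lap{n}{\epsilon}(v_i) = (c_n^\epsilon)^{N-1}e^{-\epsilon S}$, whereupon the factors $e^{-\epsilon S}$ and $e^{+\epsilon S}$ cancel, leaving the clean expression $e^{-\epsilon R}(c_n^\epsilon)^{N-1}\int_{S\le R} e_{n-1}(\epsilon(R-S))\,dv_1\dots dv_{N-1}$.

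I then linearise this correction with two estimates. Since $e_{n-1}$ is increasing and $R-S\le R$ throughout the region, I replace $e_{n-1}(\epsilon(R-S))$ by the constant upper bound $e_{n-1}(\epsilon R)$; and since $\{S\le R\}\subseteq\{\Euclidean{v_j}\le R \text{ for all } j\}$, the remaining volume is at most $V_n(R)^{N-1}$, where $V_n(R)$ is the volume of the $n$-ball of radius $R$. Both estimates enlarge the subtracted quantity, hence keep a valid lower bound, and give the recurrence $I_N \ge I_{N-1} - e^{-\epsilon R}(c_n^\epsilon)^{N-1}e_{n-1}(\epsilon R)\,V_n(R)^{N-1}$. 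Finally I would unwind this against the base case $I_1 = {\cal L}_n^\epsilon(R) = 1 - e^{-\epsilon R}e_{n-1}(\epsilon R)$ (again \Lem{l1600}): setting $A \Defs e^{-\epsilon R}e_{n-1}(\epsilon R)$ and $q \Defs c_n^\epsilon V_n(R)$, the accumulated corrections form the geometric sum $A\sum_{0\le k\le N-1}q^k = A\,(q^N-1)/(q-1)$, which after substituting $R=N\Delta$ is precisely the claimed bound.

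The main obstacle I anticipate is the bookkeeping in the middle step rather than any deep idea: one must get the direction of the two lossy approximations right so that they accumulate into a \emph{lower} bound, and must handle the $e^{\pm\epsilon S}$ cancellation carefully so that the residual integral is genuinely independent of $S$ before the volume estimate is applied. Everything else --- the peeling, the appeal to \Lem{l1600} for the radial CDF, and the geometric-series unwinding --- is routine once that cancellation is in place.
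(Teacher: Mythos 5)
Your proposal is correct and follows essentially the same route as the paper's own proof: recentring to reduce to $I_N$, peeling off the last variable and evaluating the inner radial integral via \Lem{l1600}, cancelling the $e^{\pm\epsilon S}$ factors, applying the same two lossy estimates (monotonicity of $e_{n-1}$ and the volume bound $V_n(R)^{N-1}$), and unwinding the resulting recurrence as a geometric series with $R = N\Delta$. In fact you make explicit two steps the paper leaves implicit --- the translation-invariance recentring and the base case $I_1 = 1 - e^{-\epsilon R}e_{n-1}(\epsilon R)$ feeding the geometric sum --- so no gaps remain.
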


For the proof of  \Thm{l1534}, we need to make some further simplifications by using some additional constraints on the data.

In our application we know that the word2vec embeddings are typically of the order $n \geq 30$. In this case we can make the following approximations to $ c_n^\epsilon V_n(R)$.

From \Cor{c1234} we can compute $c_n^\epsilon = c_0 / S_{n-1}(1)$, where $c_0 = \epsilon^n/(n{-}1)!$ and $S_n(1)$ is the surface of an $n$-dimensional unit sphere. Using exact formulae for $V_n(R)$ and $S_{n-1}(1)$ we obtain:

\begin{equation}\label{e1128}
c_n^\epsilon V_n(R) \Wide{=} \epsilon^n/(n-1)! \times \frac{\pi^{n/2} R^n}{\Gamma(1+ n/2)} \times \frac{\Gamma(1 + n/2)}{n\pi^{n/2}} = \frac{(\epsilon R)^n}{n!}
\end{equation}

Using Stirling's approximation for $n!$ we obtain

\begin{equation}\label{e1128-B}
c_n^\epsilon V_n(R) \Wide{\approx}  \left( \frac{\epsilon e R}{n} \right)^n \times \frac{1}{\sqrt{2\pi n}}
\end{equation}

Comparing to our formula for  \Thm{l1534}, we can see that if we set $\epsilon e R \leq n$ then \Eqn{e1128-B} is less than $\frac{1}{\sqrt{2\pi n}}$ and so, for $n \geq 30$, we have 

\[
\frac{((c_n^\epsilon V_n(N\Delta))^N{-} 1)}{c_n^\epsilon V_n(N\Delta){-} 1} \approx 1
\]

giving finally that if $\epsilon e N\Delta \leq n$ then the utility calculation reduces to:

\[
{\int \dots \int_{v \in Z_M}\prod_{1\leq i\leq N} \Lap{n}{\epsilon}(\underline{b}_i {-} v_i) dv_1\dots dv_N}  \Wide{\geq} 1- e^{-\epsilon N \Delta}e_{n{-}1}(\epsilon N\Delta)
\]

which would be expected of a linear-like integration over an n-dimensional variable.

\section{Experimental Details}
\label{app:exper}

Here we describe further details of our constructed dataset and implemented inference mechanisms to support replicability, along with some additional analysis.

\subsection{Dataset Construction}
\label{app:data}

\paragraph{Document Sets}
Following \cite{koppel2011authorship}, we take the first 2000 words of each story to constitute the known-author \textsc{text}, and the final 1000 words of each to constitute the unknown-author \textsc{snippet}.  We then normalise the text (removing stopwords and punctuation, and lowercasing, which are standard for topic classification) and then --- because our mechanism requires each bag-of-words document to be of a fixed size $N$ --- truncate each text or snippet to the length of the shortest one in the set.  So for our 20-author set the texts are of length 420, and for the 50-author set length 402.

\paragraph{Word Embeddings}
The Google News word2vec embeddings\footnote{\url{https://code.google.com/archive/p/word2vec/}}  are 300-dimensional embeddings that were trained on about 100 billion words of news text and in full contain about 3 million words and phrases.  To make our experiments computationally feasible, we restricted our vocabulary to the 100,000 most frequent words for our 20-document dataset and the 30,000 most frequent words for our 50-document dataset.

\subsection{Inference Mechanism Implementation Details}
\label{app:impl}

\paragraph{Different-representation author identification}
The algorithm of \cite{koppel2011authorship} does not require any training for our purposes.  (In tasks where a ``don't know'' answer is permitted, there is a threshold parameter $\sigma$ that can be learnt, but we do not use this.)  There are a few hyperparameters to the method (e.g. size of character n-grams, number of character n-grams in the feature vector); for the most part we use the hyperparameter settings of the replication we used as our starting point\footnote{\url{https://github.com/pan-webis-de/koppel11}} \cite{potthast-etal:2016:ECIR}, which were set on the basis of the empirical analysis of \cite{koppel2011authorship}.  Only the minimum text length for training is changed to 400, given the length of our texts.

\paragraph{Different-representation topic classification}
In training, we use pretrained embeddings trained on about 16 billion words of English Wikipedia text.\footnote{\url{https://fasttext.cc/docs/en/english-vectors.html}}  Our in-domain dataset consists of texts from 500 authors chosen at random, some of whom had written multiple stories; from this we derived a training set of 448 known-author texts of size 2000 each, split evenly between the two topics, and a comparable validation set of 111 known-author texts.  We trained the classifier for 25 epochs, with learning rate 1.0. On the validation set, the accuracy is 0.937.

\subsection{Analysis of Examples}
\label{app:analysis}

\begin{table}[]
    \centering
    
    {\footnotesize
    \begin{tabular}{|l||r|r|l|}
    \hline
snippet & $i$ & $d_i$ & $a_i$\\
\hline
\hline
4-Bad-Boys-of-Twilight-0.txt 
& 1 & 2.112 &	AliceInMyWonderland\\
& 2 & 2.136	& AliciaMarieSwan\\
& 3 & 2.147 &	Bad-Boys-of-Twilight\\
\hline
4-Anything-Goes-Twific-Contest-1.txt 
& 1 &	1.618 &	Anything-Goes-Twific-Contest\\
& 2 &	1.773 &	91BlackMoon\\
& 3 &	1.805 &	AliciaMarieSwan\\
\hline
0-dorahatesexploring-0.txt 
& 1 &	1.844 &	dorahatesexploring\\
& 2 &	1.879 &	Dr-Mini-me\\
& 3 &	1.974 &	k-kizkhalifa\\
\hline
4-chiriko1117-0.txt 
& 1 &	1.913 &	Dr-Mini-me\\
& 2 &	1.948 &	AliciaMarieSwan\\
& 3 &	1.986 &	91BlackMoon\\
\hline
0-FateRogue-2.txt 
& 1 &	1.798 &	FateRogue\\
& 2 &	1.854 &	Dr-Mini-me\\
& 3 &	1.896 &	AliciaMarieSwan\\
\hline
    \end{tabular}
    
    \vspace{0.25cm}
    
    }
    \caption{Distances for a sample of unmodified snippets from the 20-author dataset. $d_i$ is the distance to $i$th closest neighbour known-author text; $a_i$ is the identity of the $i$th closest author.}
    \label{tab:distances}
\end{table}

Table~\ref{tab:distances} illustrates the three closest distances for a sample of unknown-author snippets, along with their authors.  It can be seen that the distances are relatively close, a consequence of the high-dimensional Euclidean space.

\begin{table}[]
    \centering
    {\footnotesize
    \begin{tabular}{|l||l|l|l|l|l|l|}
    \hline
$\epsilon$ & \multicolumn{6}{c|}{words 1--6}\\
    \hline
    \hline
none & heard & grandfather & pollux & dubbed & uncle & alphard\\
\hline
$30$ & heard & grandfather & pollux & orient & grandma & alphard\\
$20$ & Walt\_Disney & grandfather & pollux & orient & grandma & alphard\\
$10$ & Walt\_Disney & Guinness\_Book & pollux & tilted & Public\_Defender & alphard\\
\hline
    \end{tabular}
    
    \vspace{0.1cm}
    
        \begin{tabular}{|l||l|l|l|l|l|}
    \hline
$\epsilon$ & \multicolumn{5}{c|}{words 7--11}\\
    \hline
    \hline
none & sympathizer & disowned & regulus & hurt & felt\\
\hline
$30$ & sympathizer & disowned & regulus & Harrison & feels\\
$20$ & sympathizer & Records & regulus & culprits & algebra\\
$10$ & premeditated\_murder & ENERGY\_STAR & regulus & culprits & plagiarism\\
\hline
    \end{tabular}
    
    \vspace{0.25cm}
    
    }
    \caption{The beginnings of one particular snippet under various levels of $\epsilon$.}
    \label{tab:samples}
\end{table}

Table~\ref{tab:samples} illustrates some of the changes introduced by the privacy mechanism under various levels of $\epsilon$ for a single sample snippet.  $\epsilon=30$ produces very minor changes, mostly semantically close (\textit{felt} $\rightarrow$ \textit{feels},
\textit{uncle} $\rightarrow$ \textit{grandma}), but with some greater randomness as well (\textit{hurt} $\rightarrow$ \textit{Harrison}).  This increases as $\epsilon$ decreases, until there are some very unlikely words (e.g. \textit{ENERGY\_STAR}), as expected.

In a practical application, such unlikely words or phrases like \textit{ENERGY\_STAR} would look rather out of place with respect to the domain.  However, our choice of word2vec vocabulary for this experiment was in a sense arbitrary; a practical application could use a vocabulary tailored to the domain, by (say) culling entries that do not appear in a training set.

\end{document}